\def\x#1{\textbf{#1}}
\newtheorem{theorem}{Theorem}
\newtheorem{thm}[theorem]{Theorem}
\newtheorem{lemma}[theorem]{Lemma}
\newtheorem{defi}[theorem]{Definition}
\newtheorem{example}[theorem]{Example}
\newtheorem{convention}[theorem]{Convention}
\def\syz{\operatorname{Syz}}
\renewcommand{\iff}{\Leftrightarrow}
\newcommand{\myitem}[1]{%
\item[(#1)]\protected@edef\@currentlabel{#1}%
}
\def\eatspace#1{#1}
\def\step#1#2{\par\kern1pt\hangindent#2em\hangafter=1\noindent\rlap{\small#1}\kern#2em\relax\eatspace}
\let\set\mathbb
\def\<#1>{\langle#1\rangle}
\def\id{\operatorname{id}}
\def\lc{\operatorname{lc}}
\begin{document}
\fancyhead{}
\title{Order-Degree-Height Surfaces for Linear Operators}
\titlenote{H. Huang was supported by the NSFC grant (No.\ 12101105) and 
    the Fundamental Research Funds for the Central Universities.
    M. Kauers and G. Mukherjee were supported by the Austrian FWF grants P31571-N32 and W1214, project part~13.
}

\author[H. Huang]{Hui Huang}
\affiliation{%
  \institution{School of Mathematical Sciences, \\Dalian University of Technology}
  \city{Dalian, Liaoning, 116024}
  \state{}
  \postcode{}
  \country{China}
}
\email{huanghui@dlut.edu.cn}

\author[M. Kauers]{Manuel Kauers}
\author[G. Mukherjee]{Gargi Mukherjee}
\affiliation{%
  \institution{Institute for Algebra, Johannes Kepler University}
  \city{Linz, A4040}
  \state{}
  \postcode{}
  \country{Austria}
}
\email{manuel.kauers@jku.at}
\email{gargi.mukherjee@jku.at}

\begin{abstract}
  It is known for linear operators with polynomial coefficients annihilating a given D-finite function that
  there is a trade-off between order and degree.
  Raising the order may give room for lowering the degree.
  The relationship between order and degree is typically described by a hyperbola
  known as the order-degree curve.
  In this paper, we add the height into the picture, i.e., a measure for the size
  of the coefficients in the polynomial coefficients.
  For certain situations, we derive relationships between order, degree, and height
  that can be viewed as order-degree-height surfaces.
\end{abstract}
\begin{CCSXML}
	<ccs2012>
	<concept>
	<concept_id>10010147.10010148.10010149.10010150</concept_id>
	<concept_desc>Computing methodologies~Algebraic algorithms</concept_desc>
	<concept_significance>500</concept_significance>
	</concept>
	</ccs2012>
\end{CCSXML}

\ccsdesc[500]{Computing methodologies~Algebraic algorithms}

\keywords{Ore algebras, order-degree curves, creative telescoping, D-finite functions}
\maketitle

\section{Introduction}


D-finiteness and operations preserving D-finiteness play an important role
in computer algebra. A function is called D-finite if it is annihilated by
a nonzero linear operator with polynomial coefficients. Annihilating operators
are used to represent D-finite functions, and to perform operations with them. It
is therefore of interest to know how big such operators are. The primary
interest is in the order of the operator. An operator of minimal order is
a generator of the ideal of all annihilating operators for a given D-finite function, and certain
facts about the function, for example its asymptotic behaviour, can be
most easily extracted from such an operator. A disadvantage of a minimal
order operator is that it may not be the smallest operator in terms of arithmetic size. The secondary
interest is in the degrees of the polynomial coefficients of the operator.
The arithmetic size is roughly the product of order and degree, and the
minimal arithmetic size is often not achieved by the operator of minimal
order. Instead, there is the possibility of trading order against degree:
allowing a slightly higher order of the
operator can lead to a substantial degree drop which altogether results in
an operator of smaller arithmetic size. This effect has been observed and
analyzed for many different operations related to linear operators
\cite{BCLSS2007,ChKa2012a,ChKa2012b,CJKS2013,Kaue2014}.
Typical bounds are formulated in terms of hyperbolas such that for every
point $(r,d)$ above the hyperbola, the function at hand has an
annihilating operator of order $r$ and degree~$d$. These hyperbolas are
known as order-degree curves.

Besides order and degree, the size of an operator depends on a third parameter,
called the height. It measures the size of the coefficients of the polynomial
coefficients of the operator. For example, for differential operators
over polynomials over the integers, it matters how long the integers 
appearing in the operator are. If we define the height as the log of the largest integer
(in absolute value), the bitsize of an operator is roughly the product of its
order, its degree, and its height. Besides some bounds for 
hypergeometric creative telescoping~\cite{KaYe2015} and D-finite closure properties~\cite{Kaue2014},
not much is known about the bit size of operators that arise from computations
with D-finite functions. The
experience is however that the point $(r,d)$ on the order-degree curve which
minimizes the arithmetic size does not also minimize the bitsize. But then,
which point on the curve does minimize the bitsize? Does the point of minimal
bitsize even sit on the curve, or can it happen that a further reduction of
the bitsize is possible by increasing both the order and the degree? These were
the motivating questions for the present paper.

Our goal is to extend the theory of order-degree curves to order-degree-height
surfaces. Instead of hyperbolas that describe the points $(r,d)$ such that a
given function has annihilating operators of order~$r$ and degree~$d$, we will
see surfaces defined by polynomials describing the points $(r,d,h)$
such that the function has annihilating operators of order~$r$, degree~$d$, and height~$h$.
Unfortunately, we do not have any such results for operators in $\set Z[x][\partial]$, with
the height defined by the longest integer. In order to be able to apply the
techniques from linear algebra that were used for deriving order-degree curves,
we instead consider operators in $C[y][x][\partial]$ where $C$ is a field, with
the height defined by the degree in~$y$. Our results in this setting suggest
that order and degree can indeed be traded against height, and experiments with
actual operators confirm this qualitative prediction, even though our bounds
quantitatively are not very tight. Some of our bounds are particularly bad for
large orders~$r$. While for a fixed degree and $r\to\infty$, we would usually
expect a bound on the height that converges to a constant, two of our bounds
grow quadratically in~$r$. This may be an indication that the linear algebra
approach which was used for deriving tight order-degree curves is not sufficient
for deriving sharp order-degree-height surfaces. 

The remainder of the paper proceeds as follows. In Section~\ref{SEC:prelim}, 
we provide background and basic notions required in the paper. We then derive
order-degree-height surfaces for three different situations:
(a)~in Section~\ref{SEC:clm} for common left multiples of some given operators, extending the work of~\cite{Kaue2014},
(b)~in Section~\ref{SEC:ct} for hypergeometric creative telescoping, extending the work of~\cite{ChKa2012a}, and
(c)~in Section~\ref{SEC:desing} for contraction ideals of operators, extending the work of~\cite{CJKS2013}.
In all these cases, we derive polynomials $p$ such that for all $r,d,h$ with $p(r,d,h)\geq0$ there
exists an operator of order~$r$, degree~$d$, and height~$h$ with the desired feature.
The boundary of the region defined by this polynomial inequality is what we call the order-degree-height
surface.

\section{Notation and General Assumptions}\label{SEC:prelim}

The notation $R[\partial]$ already used in the introduction refers to an Ore
algebra over the commutative ring~$R$. Recall that such an Ore algebra is
defined in terms of an endomorphism $\sigma\colon R\to R$ and a $\sigma$-derivation
$\delta\colon R\to R$, i.e., a map satisfying $\delta(a+b)=\delta(a)+\delta(b)$
and $\delta(ab)=\delta(a)b+\sigma(a)\delta(b)$ for all $a,b\in R$.
The addition in $R[\partial]$ is defined as for polynomials, and the
multiplication in $R[\partial]$ extends the multiplication on $R$ through
the commutation rule $\partial a=\sigma(a)\partial+\delta(a)$ ($a\in R$).
An element $a$ of $R$ is called a constant if $\sigma(a)=a$ and $\delta(a)=0$.

Canonical examples for Ore algebras include differential operators, where
$\sigma=\id$ and $\delta$ is a derivation on~$R$, and difference operators,
where $\sigma$ is an automorphism on $R$ and $\delta=0$. Throughout this
paper, $R$ will be a ring of polynomials or rational functions in two variables
$x,y$ over a field~$C$ of characteristic zero, e.g. $R=C[x,y]$ or $R=C(x,y)$ or $R=C(x)[y]$, etc.
We assume throughout that the elements of $C[y]$ (and thus $C(y)$) are constants. We further
assume that $\sigma$ is an automorphism of $R$, 
$\sigma$ and $\delta$ map polynomials to polynomials, and that
they do not increase degrees. More precisely, we assume
\begin{alignat*}1
  &\deg_x(\sigma(f))=\deg_x(f),\quad\deg_x(\delta(f))\leq\deg_x(f),\\
  &\deg_y(\sigma(f))=\deg_y(f),\quad\deg_y(\delta(f))\leq\deg_y(f)
\end{alignat*}
for all $f\in C[x,y]$.
This is true for example for differential operators with the derivation $\delta=\frac{d}{dx}$
and for difference operators with the shift $\sigma(x)=x+1$.

The order of an element $L\in R[\partial]$ is defined as $\deg_\partial(L)$.
The height of an element $p\in C[x,y]$ is defined as $\deg_y(p)$, and
$\deg_x(p)$ is called its degree. The height/degree of a rational
function $p/q$ is defined as $\deg(p)-\deg(q)$ (for $\deg=\deg_y$ and $\deg=\deg_x$,
respectively). If $\sigma=\id$ and $\delta=\frac{d}{dx}$, we write $D_x$ instead of~$\partial$,
and if $\sigma(x)=x+1$ and $\delta=0$, we write $S_x$ instead of~$\partial$.

Elements of an Ore algebra $R[\partial]$ can be interpreted as operators.
A function space $F$ for $R[\partial]$ is defined as a left $R[\partial]$-module.
For a fixed $f\in F$, the set of all $L\in R[\partial]$ with $L\cdot f=0$ forms
a left ideal of $R[\partial]$, called the annihilator of~$f$. As we will only
consider left ideals in this paper, we will drop the attribute `left' from
now on. If $R$ is a field, then $R[\partial]$ is a left-Euclidean domain, so
every ideal is generated by a single element. In this case, any nonzero
ideal element of minimal order can serve as a generator. We call such elements
minimal. They are pairwise associate, i.e., any two minimal elements of an
ideal are left-$R$-multiples of each other.

\section{Common Left Multiples}\label{SEC:clm}

A common left multiple of some operators $L_1,\dots,L_n\in R[\partial]$ is
an operator $L\in R[\partial]$ such that there exist operators $M_1,\dots,M_n\in R[\partial]$
with $L = M_1L_1=\cdots=M_nL_n$.
If $L_1,\dots,L_n$ are annihilating operators of certain functions $f_1,\dots,f_n$,
then $L$ is an annihilator for every $C$-linear combination $\alpha_1f_1+\cdots+\alpha_nf_n$
of these functions. 
Bounds on orders and degrees of common left multiples were given in~\cite{BCLS2012}. 
An order-degree curve for this case has appeared in~\cite{Kaue2014}. 
Continuing the discussion in Section~2.2 of~\cite{Kaue2014}, %
we provide an order-degree-height surface describing the shapes of common
left multiples of $L_1,\dots,L_n$.

\begin{thm}\label{thm:lclm}
  Let $L_1,\dots,L_n\in C[x,y][\partial]$ and let
  $r_\ell=\deg_{\partial}(L_\ell)$,
  $d_\ell=\deg_x(L_\ell)$,
  $h_\ell=\deg_y(L_\ell)$
  for $\ell=1,\dots,n$.
  Let $r,d,h\in \set N$ be such that
  \begin{alignat*}1
    &(r+1)(d+1)(h+1)\\
    &- (r+1)(d+1)\sum_{\ell=1}^n h_\ell \ + \ (h+1)\sum_{\ell=1}^n r_\ell d_\ell\\
    &- (r+1)(h+1)\sum_{\ell=1}^n d_\ell \ + \ (d+1)\sum_{\ell=1}^n r_\ell h_\ell\\
    &- (d+1)(h+1)\sum_{\ell=1}^n r_\ell \ + \ (r+1)\sum_{\ell=1}^n d_\ell h_\ell
     - \sum_{\ell=1}^n r_\ell d_\ell h_\ell > 0.
  \end{alignat*}
  Then there exists a common left multiple of $L_1,\dots,L_n$ in $C[x,y][\partial]$
  of order at most~$r$, degree at most~$d$, and height at most~$h$.
\end{thm}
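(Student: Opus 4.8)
The plan is to reduce the statement to a dimension count for a single $C$-linear map, in the spirit of the order-degree curve argument of~\cite{Kaue2014}, but now carrying the $y$-degree along as a third size parameter. Throughout I will assume $r\ge r_\ell$, $d\ge d_\ell$ and $h\ge h_\ell$ for every $\ell$ (otherwise no nonzero common left multiple of the prescribed shape exists, and the stated inequality is to be read with that proviso). For each $\ell$ introduce an operator $M_\ell\in C[x,y][\partial]$ with undetermined $C$-coefficients, of order at most $r-r_\ell$, degree at most $d-d_\ell$, and height at most $h-h_\ell$; such an $M_\ell$ carries $(r-r_\ell+1)(d-d_\ell+1)(h-h_\ell+1)$ free coefficients. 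Consider the $C$-linear map
\[
  \Psi\colon\ (M_1,\dots,M_n)\ \longmapsto\ \bigl(M_1L_1-M_2L_2,\ M_2L_2-M_3L_3,\ \dots,\ M_{n-1}L_{n-1}-M_nL_n\bigr).
\]
A nonzero element of $\ker\Psi$ yields operators with $M_1L_1=\dots=M_nL_n=:L$; since $C[x,y]$ is an integral domain and $\sigma$ is injective, $C[x,y][\partial]$ has no zero divisors, so $L=0$ would force every $M_\ell=0$. Hence $L\ne 0$, and by construction $L$ is a common left multiple of $L_1,\dots,L_n$.

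The second ingredient is a size estimate for products. Using the standing assumptions that $\sigma$ preserves $\deg_x$ and $\deg_y$ and that $\delta$ does not increase them, one checks that $\deg_\partial(AB)\le\deg_\partial A+\deg_\partial B$, $\deg_x(AB)\le\deg_x A+\deg_x B$, and $\deg_y(AB)\le\deg_y A+\deg_y B$ for all $A,B\in C[x,y][\partial]$, because the commutation rule only ever applies $\sigma$ and $\delta$ to coefficients. Consequently each $M_\ell L_\ell$ — and therefore each component of $\Psi(M_1,\dots,M_n)$ — has order at most $r$, degree at most $d$, and height at most $h$. Thus $\im\Psi$ sits inside an $(n-1)(r+1)(d+1)(h+1)$-dimensional $C$-space, while the domain of $\Psi$ has dimension $\sum_{\ell=1}^n (r-r_\ell+1)(d-d_\ell+1)(h-h_\ell+1)$. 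Hence $\ker\Psi\ne 0$, and the theorem follows, whenever
\[
  \sum_{\ell=1}^n (r-r_\ell+1)(d-d_\ell+1)(h-h_\ell+1)\ >\ (n-1)(r+1)(d+1)(h+1).
\]

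It remains to recognize that this inequality is exactly the hypothesis of the theorem in disguise. Writing $(n-1)(r+1)(d+1)(h+1)=n(r+1)(d+1)(h+1)-(r+1)(d+1)(h+1)$ and transposing, the inequality becomes
\[
  (r+1)(d+1)(h+1)\ >\ \sum_{\ell=1}^n\Bigl[(r+1)(d+1)(h+1)-(r-r_\ell+1)(d-d_\ell+1)(h-h_\ell+1)\Bigr],
\]
and with $a=r+1$, $b=d+1$, $c=h+1$, $u=r_\ell$, $v=d_\ell$, $w=h_\ell$ the $\ell$-th summand equals $abc-(a-u)(b-v)(c-w)=abw+acv+bcu-avw-buw-cuv+uvw$; summing over $\ell$ reproduces term by term the seven-part polynomial inequality in the statement. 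The write-up is then essentially bookkeeping: the only delicate points are the degree-non-increase estimates for products (where the structural hypotheses on $\sigma$ and $\delta$ enter) and the elementary algebraic rearrangement just sketched. The true weakness — a limitation of the method rather than an obstacle in the proof — is that bounding $\dim(\im\Psi)$ by the full dimension of the target instead of by its actual rank is wasteful, which is why the resulting surface is not tight; sharpening it would require understanding the genuine linear relations among the $M_\ell L_\ell$.
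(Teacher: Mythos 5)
Your proof matches the paper's almost verbatim: the same ansatz of multipliers $M_\ell$ of order $\le r-r_\ell$, degree $\le d-d_\ell$, height $\le h-h_\ell$ with undetermined $C$-coefficients, the same reduction to an underdetermined homogeneous linear system obtained by equating the $M_\ell L_\ell$ pairwise, and the same count of $\sum_\ell(r-r_\ell+1)(d-d_\ell+1)(h-h_\ell+1)$ variables versus $(n-1)(r+1)(d+1)(h+1)$ equations. You spell out two points the paper leaves implicit — that a nonzero kernel element yields a nonzero common multiple because $C[x,y][\partial]$ has no zero divisors, and the elementary expansion showing the stated seven-term inequality is exactly ``variables exceed equations'' — but this is bookkeeping, not a different method.
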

\begin{proof}
  Make an ansatz for $n$ operators
  \[
    M_\ell = \sum_{i=0}^{r-r_\ell}\sum_{j=0}^{d-d_\ell}\sum_{k=0}^{h-h_\ell} m_{i,j,k,\ell} y^kx^j\partial^i
  \]
  with undetermined coefficients $m_{i,j,k,\ell}$ and equate the coefficients of $M_\ell L_\ell$
  for $\ell=1,\dots,n$ to each other in order to obtain a common left multiple of the desired shape.
  This leads to 
  \begin{alignat*}1
    &\sum_{\ell=1}^n\sum_{i=0}^{r-r_\ell}\sum_{j=0}^{d-d_\ell}\sum_{k=0}^{h-h_\ell}1
  \end{alignat*}
  variables and $(n-1)(r+1)(d+1)(h+1)$ equations.
  If $r,d,h$ are as in the statement of the theorem, the linear system has more variables than
  equations and therefore a nonzero solution. 
\end{proof}

In the proof of the theorem, not only the input and output operators are restricted to $C[x,y][\partial]$,
but also the multipliers $M_1,\dots,M_n$. In general, allowing the $M_1,\dots,M_n$ to be in $C(x,y)[\partial]$
can lead to common multiples of lower degree or height. 

\begin{example}
  We consider two randomly chosen operators $L_1,L_2\in\set Q[x,y][D_x]$ with
  $\deg_{D_x}(L_1)=2$, $\deg_{D_x}(L_2)=1$, $\deg_{x}(L_1)=1$,
  $\deg_{x}(L_2)=2$, $\deg_{y}(L_1)=1$, $\deg_{y}(L_2)=1$.
  In the following table, we compare the actual sizes of common left multiples of $L_1,L_2$
  with the sizes predicted by Thm.~\ref{thm:lclm}.
  A table entry $u|v$ in column~$r$ and row~$d$ 
  means that Thm.~\ref{thm:lclm} predicts a common multiple of order~$r$, degree~$d$ and any height $h\geq u$,
  and for the specific operators $L_1,L_2$ we found a common multiple of order~$r$, degree~$d$, and height $h=v$.
  The common multiples were computed in $C(x,y)[D_x]$. 
  A dot means that no operator of the respective order and degree was found or predicted.
  In this experiment, Thm.~\ref{thm:lclm} predicts orders and degrees correctly but the bound on the height overshoots.
  At the same time, it does at least qualitatively reflect the effect that increasing both $r$ and $d$
  makes it possible to decrease~$h$.
  We have repeated the experiment with operators $L_1,L_2$ of some other shapes and always found a similar conclusion.
  Note that in this particular example, Thm.~\ref{thm:lclm} also rightly predicts the symmetry w.r.t.\ $r$ and~$d$. 
  \[\scriptscriptstyle
  \begin{array}{@{}c|c@{\kern5pt}c@{\kern5pt}c@{\kern5pt}c@{\kern5pt}c@{\kern5pt}c@{\kern5pt}c@{\kern5pt}c@{\kern5pt}c@{\kern5pt}c@{\kern5pt}c@{\kern5pt}c@{}}
    d \\
    10& {\cdot|\cdot} & {\cdot|\cdot} & {\cdot|\cdot} & {15|5} & {6|4} & {4|3} & {3|3} & {3|3} & {3|3} & {3|2} & {3|2} \\
    9& {\cdot|\cdot} & {\cdot|\cdot} & {\cdot|\cdot} & {21|5} & {6|5} & {4|4} & {4|3} & {3|3} & {3|3} & {3|3} & {3|2} \\
    8& {\cdot|\cdot} & {\cdot|\cdot} & {\cdot|\cdot} & {37|5} & {7|5} & {5|4} & {4|3} & {3|3} & {3|3} & {3|3} & {3|3} \\
    7& {\cdot|\cdot} & {\cdot|\cdot} & {\cdot|\cdot} & {\cdot|\cdot} & {9|6} & {5|4} & {4|3} & {4|3} &  {3|3} & {3|3} & {3|3} \\
    6& {\cdot|\cdot} & {\cdot|\cdot} & {\cdot|\cdot} & {\cdot|\cdot} & {12|8} & {7|5} & {5|4} & {4|3} & {4|3} & {4|3} & {3|3} \\
    5& {\cdot|\cdot} & {\cdot|\cdot} & {\cdot|\cdot} & {\cdot|\cdot} & {31|19} & {10|7} & {7|5} & {5|4} & {5|4} & {4|4} & {4|3} \\
    4& {\cdot|\cdot} & {\cdot|\cdot} & {\cdot|\cdot} & {\cdot|\cdot} & {\cdot|\cdot} & {31|19} & {12|8} & {9|6} & {7|5} & {6|5} & {6|4} \\
    3& {\cdot|\cdot} & {\cdot|\cdot} & {\cdot|\cdot} & {\cdot|\cdot} & {\cdot|\cdot} & {\cdot|\cdot} & {\cdot|\cdot} & {\cdot|\cdot} & {37|5} & {21|5} & {15|5} \\
    2& {\cdot|\cdot} & {\cdot|\cdot} & {\cdot|\cdot} & {\cdot|\cdot} & {\cdot|\cdot} & {\cdot|\cdot} & {\cdot|\cdot} & {\cdot|\cdot} & {\cdot|\cdot} & {\cdot|\cdot} & {\cdot|\cdot} \\
    1& {\cdot|\cdot} & {\cdot|\cdot} & {\cdot|\cdot} & {\cdot|\cdot} & {\cdot|\cdot} & {\cdot|\cdot} & {\cdot|\cdot} & {\cdot|\cdot} & {\cdot|\cdot} & {\cdot|\cdot} & {\cdot|\cdot} \\
    0& {\cdot|\cdot} & {\cdot|\cdot} & {\cdot|\cdot} & {\cdot|\cdot} & {\cdot|\cdot} & {\cdot|\cdot} & {\cdot|\cdot} & {\cdot|\cdot} & {\cdot|\cdot} & {\cdot|\cdot} & {\cdot|\cdot} \\\hline
    & 0 & 1 & 2 & 3 & 4 & 5 & 6 & 7 & 8 & 9 & 10 & r 
  \end{array}
  \]
\end{example}

\section{Creative telescoping}\label{SEC:ct}

In this section, we consider bivariate proper hypergeometric terms
and aim at describing the relationships between orders, degrees and heights
for their telescopers. Bounds on these three quantities have been given in \cite{MoZe2005,ChKa2012a,KaYe2015}.
An order-degree curve for this case was also provided in \cite{ChKa2012a}.
We will closely follow the discussion in~\cite{ChKa2012a} and further extend the curve
to an order-degree-height surface.

In this respect, we need to generalize all related notions introduced
previously to the bivariate case. Let $k$ be a new variable distinct
from the two variables $x,y$ and let $C(x,y,k)$ be the field of
rational functions in $x,y,k$. Let $S_x$ and $S_k$ denote the
difference operators on $C(x,y,k)$ defined by
\[
S_x(f(x,y,k)) = f(x+1,y,k)
\ \ \text{and}\ \
S_k(f(x,y,k)) = f(x,y,k+1)
\]
for any rational function $f\in C(x,y,k)$. Clearly, $S_x$ and $S_k$
commute with each other and leave elements in $C(y)$ fixed.

We will be considering some extension field $E$ of $C(x,y,k)$ on which
difference operators $S_x$ and $S_k$ can be naturally extended. A
{\em hypergeometric term} with respect to $x,k$ is a nonzero element $H$ of
such an extension field $E$ with $S_x(H)/H \in C(x,y,k)$ and
$S_k(H)/H\in C(x,y,k)$. 
We are particularly interested in proper hypergeometric terms, that is,
hypergeometric terms which can be rewritten in the following form
\begin{equation}\label{EQ:hyper}
H = p \alpha^x \beta^k \prod_{m=1}^M
\frac{\Gamma(a_mx+a_m'k+a_m'')\Gamma(b_mx-b_m'k+b_m'')}
{\Gamma(u_mx+u_m'k+u_m'')\Gamma(v_mx-v_m'k+v_m'')},
\end{equation}
where $p\in C[x,y,k]\setminus\{0\}$, $\alpha,\beta\in C[y]\setminus\{0\}$, $M\in \set N$,
$a_m,a_m'$, $b_m,b_m'$, $u_m,u_m'$, $v_m,v_m'\in \set N$,
$a_m'',b_m'',u_m'',v_m''\in C[y]$. 

Let $H$ be a proper hypergeometric term of the form \eqref{EQ:hyper}.
The method of creative telescoping consists in finding polynomials
$c_0,c_1,\dots,c_r\in C[x,y]$, not all zero, and a rational function
$g\in C(x,y,k)$ such that
\[
c_0 H + c_1 S_x(H)+\cdots+c_rS_x^r(H) = S_k(gH) - gH.
\]
If $c_0,c_1,\dots,c_r$ and $g$ are as above, then we say that the
operator $L = c_0+c_1S_x+\cdots+c_rS_x^r\in C[x,y][S_x]$ is a
{\em telescoper} for $H$ and $g$ is a {\em certificate} for $L$ (and
$H$). According to the fundamental lemma in \cite{WiZe1992a}, the term
$H$ always admits a telescoper. All telescopers for $H$ form an
ideal in $C[x,y][S_x]$.

The main task of this section is to derive a surface in the
$(r,d,h)$-space which indicates that for every integer point $(r,d,h)$
above the surface, there is a telescoper for $H$ of order $r$, degree $d$ and height~$h$.
Following \cite{ChKa2012a}, we make a case
distinction between non-rational input and rational input.

\subsection{The non-rational case}
In this subsection, we assume that the following holds.
\begin{convention}\label{CON:nonrat}
Let $H$ be a proper hypergeometric term of the form \eqref{EQ:hyper},
and assume that $H$ cannot be split into $H = fH_0$ for $f\in
C(x,y,k)$ and another proper hypergeometric term $H_0$ with
$S_k(H_0)/H_0 = 1$.
\end{convention}
Informally speaking, the above assumption excludes those terms of the
form \eqref{EQ:hyper} in which $\beta = 1$ and every $\Gamma$-term
involving $k$ can be cancelled against another one up to some rational
function. Those exceptional terms are treated separately in
Section~\ref{SUBSEC:rat} below.

With Convention~\ref{CON:nonrat}, for any rational function $g\in
C(x,y,k)\setminus\{0\}$, we know that $gH$ cannot be split into the indicated way
either. In particular, $S_k(gH)/(gH)\neq 1$, that is, $S_k(gH) -
gH \neq 0$. As a consequence, whenever the pair $(L,g)$ with $L\in C[x,y][S_x]$
and $g\in C(x,y,k)\setminus\{0\}$ is such that $L(H) = S_k(gH) - gH$, we can
guarantee that $L$ is not the zero operator. Thus, in the sequel of
this subsection, we need not worry about this requirement any more.

The main idea we are using for analysis is the same
as \cite{MoZe2005,ChKa2012a}, that is, to go through steps of
Zeilberger's algorithm \cite{Zeil1991} literally when applied to the
given proper hypergeometric term $H$, and then reduce the problem of computing a telescoper
to solving a homogeneous system of linear equations over~$C$, which will have
a nontrivial solution whenever it is underdetermined.

With Convention~\ref{CON:nonrat}, for any integer $m$ with $1\leq
m\leq M$, define
\begin{align*}
&A_m = a_mx+a_m'k+a_m'',\quad
B_m = b_mx-b_m'k+b_m'',\\
&U_m = u_mx+u_m'k+u_m'',\quad
V_m = v_mx-v_m'k+v_m''.
\end{align*}
For $p\in C[x,y,k]$ and $m\in \set N$, let $p^{\overline{m}} =
p(p+1)\dots(p+m-1)$ with the conventions $p^{\overline{0}}=1$ and
$p^{\overline{1}}=p$.
For an operator $L = c_0 + c_1 S_x + \cdots + c_r S_x^r\in
C[x,y][S_x]$, we then have
\begin{align*}
&L(H) = \sum_{i=0}^rc_i \alpha^i\frac{S_x^i(p)}p\prod_{m=1}^M
\frac{A_m^{\overline{ia_m}}B_m^{\overline{ib_m}}}
{U_m^{\overline{iu_m}}V_m^{\overline{iv_m}}}H
\\[1ex]
&= \left(\sum_{i=0}^rc_i\alpha^iS_x^i(p)\prod_{m=1}^MP_{i,m}\right)
\alpha^x\beta^k\prod_{m=1}^M
\frac{\Gamma(A_m)\Gamma(B_m)}{\Gamma(U_m+ru_m)\Gamma(V_m+rv_m)},
\end{align*}
where 
\[
P_{i,m}=A_m^{\overline{ia_m}}B_m^{\overline{ib_m}}
(U_m+iu_m)^{\overline{(r-i)u_m}}(V_m+iv_m)^{\overline{(r-i)v_m}}.
\]
We can write
\begin{equation}\label{EQ:gosperform}
\frac{S_k(L(H))}{L(H)} = \frac{S_k(P)}P\frac{Q}{S_k(R)},
\end{equation}
where
\begin{align}
P &= \sum_{i=0}^rc_i\alpha^iS_x^i(p)\prod_{m=1}^MP_{i,m},
\label{EQ:P}\\[1ex]
Q &= \beta\prod_{m=1}^MA_m^{\overline{a_m'}}(V_m+rv_m-v_m')^{\overline{v_m'}},
\label{EQ:Q}\\[1ex]
R &= \prod_{m=1}^M(U_m+ru_m-u_m')^{\overline{u_m'}}B_m^{\overline{b_m'}}.
\label{EQ:R}
\end{align}
A remarkable feature of this decomposition is that the undetermined
coefficients $c_0,\dots,c_r$ only appear linearly in the polynomial
$P$ and yet do not appear at all in the polynomials $Q,R$.
Depending on the actual values of the parameters appearing in $H$,
the decomposition \eqref{EQ:gosperform} may or may not satisfy the
condition $\gcd(Q,S_k^i(R))=1$ for all $i\in \set N$, and thus the
corresponding equation
\begin{equation}\label{EQ:gosper}
P = QS_k(Y) - R Y
\end{equation}
is not necessarily the true Gosper equation. 
However, even in this case, it only means that some solutions may be overlooked,
and any nontrivial solution $Y\in C(x,y)[k]$ to this equation will still give rise to a 
correct telescoper for $H$ and a certificate. Since our main interest lies in bounding
the size of telescopers, it is sufficient to investigate under which circumstances the equation
\eqref{EQ:gosper} with the above choices of $P,Q,R$ has a solution.

We proceed by performing a coefficient comparison between both sides
of \eqref{EQ:gosper} with respect to powers of $x,y$ and $k$, rather
than merely with respect to powers of $k$ as in~\cite{MoZe2005} or powers
of $x,k$ as in~\cite{ChKa2012a}, yielding a linear system over $C$.  In
order to better express the number of variables and equations in this
system, we make the following notational convention.
\begin{convention}\label{CON:notation}
With Convention~\ref{CON:nonrat}, let 
\begin{align*}
&\vartheta_x = \deg_x(p),\quad \vartheta_y = \deg_y(p),\quad \vartheta_k = \deg_k(p),\\
&\mu = \max\Big\{\sum_{m=1}^M(a_m+b_m),\sum_{m=1}^M(u_m+v_m)\Big\},\\
&\nu = \max\Big\{\sum_{m=1}^M(a_m'+v_m'),\sum_{m=1}^M(u_m'+b_m')\Big\},\\
&\xi = \max\Big\{\deg_y(\alpha)+\sum_{m=1}^M(a_m\deg_y(A_m)+b_m\deg_y(B_m)),\\
&\hphantom{=\max\Big(\Big)}\sum_{m=1}^M(u_m\deg_y(U_m)+v_m\deg_y(V_m))\Big\},\\
&\eta = \max\Big\{\deg_y(\beta)+\sum_{m=1}^M(a_m'\deg_y(A_m)+v_m'\deg_y(V_m)),\\
&\hphantom{=\max\Big(\Big)}\sum_{m=1}^M(u_m'\deg_y(U_m)+b_m'\deg_y(B_m))\Big\}.
\end{align*}
\end{convention}
Note that these parameters are all nonnegative integers which only
depend on $H$ but not on $r, d$ or $h$.
\begin{lemma}\label{LEM:Pdeg}
With Convention~\ref{CON:notation}, let $d$ and $h$ be the degree and
height of $L$, respectively. Then
\begin{align*}
&\deg_x(P) \leq d+\vartheta_x+r\mu,
\quad
\deg_y(P) \leq h+\vartheta_y+r\xi,\\
\text{and}\quad &\deg_k(P) \leq \vartheta_k+r\mu.
\end{align*}
As a consequence, $P$ contains at most 
\begin{equation}\label{EQ:Pterm}
(d+\vartheta_x+r\mu+1)(h+\vartheta_y+r\xi+1)(\vartheta_k+r\mu+1)
\end{equation}
nonzero monomials in terms of $x,y,k$.
\end{lemma}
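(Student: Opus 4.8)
The plan is to track the $x$-, $y$- and $k$-degrees as they propagate through the sum of products that defines $P$ in \eqref{EQ:P}. The two facts that make this routine are that each of $A_m,B_m,U_m,V_m$ is linear in $x$ and in $k$, with $y$-degree equal to the $y$-degree of its shift part ($a_m''$, $b_m''$, $u_m''$, $v_m''$, respectively), and that a rising factorial $f^{\overline{e}}=f(f+1)\cdots(f+e-1)$ has, in each variable, degree equal to $e$ times the degree of $f$: no leading coefficient can cancel, since adding an integer constant does not touch the top-degree part of $f$.

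First I would bound the degrees of each factor $P_{i,m}$. Using $\deg_x(A_m)\le1$, $\deg_k(A_m)\le1$, $\deg_y(A_m)=\deg_y(a_m'')$ together with the analogous bounds for $B_m$, $U_m+iu_m$, $V_m+iv_m$, and counting the numbers $ia_m$, $ib_m$, $(r-i)u_m$, $(r-i)v_m$ of factors in the four rising factorials, one gets
\begin{align*}
\deg_x(P_{i,m}),\ \deg_k(P_{i,m}) &\le i(a_m+b_m)+(r-i)(u_m+v_m),\\
\deg_y(P_{i,m}) &\le i\bigl(a_m\deg_y(A_m)+b_m\deg_y(B_m)\bigr)\\
&\quad +(r-i)\bigl(u_m\deg_y(U_m)+v_m\deg_y(V_m)\bigr).
\end{align*}
Summing over $m=1,\dots,M$ and invoking Convention~\ref{CON:notation}: for $x$ and $k$ the right-hand side becomes $i\sum_m(a_m+b_m)+(r-i)\sum_m(u_m+v_m)\le i\mu+(r-i)\mu=r\mu$, so $\deg_x(\prod_m P_{i,m})\le r\mu$ and $\deg_k(\prod_m P_{i,m})\le r\mu$; for $y$, adding in $\deg_y(\alpha^i)=i\deg_y(\alpha)$, the two bracketed sums are precisely the two quantities whose maximum is $\xi$, hence $\deg_y(\alpha^i\prod_m P_{i,m})\le i\xi+(r-i)\xi=r\xi$.

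Next I would add the remaining factors of the $i$-th summand of $P$: $c_i\in C[x,y]$ contributes $\deg_x\le d$, $\deg_y\le h$, $\deg_k=0$ (this is where the restriction $c_i\in C[x,y]$ matters for the $k$-bound), and $S_x^i(p)$ contributes $\deg_x\le\vartheta_x$, $\deg_y\le\vartheta_y$, $\deg_k\le\vartheta_k$ because a shift in $x$ preserves all three degrees under our standing assumptions. Since all three resulting bounds are independent of $i$, taking the maximum over $i=0,\dots,r$ (the degree of a sum is at most the maximum of the degrees of its terms) yields exactly $\deg_x(P)\le d+\vartheta_x+r\mu$, $\deg_y(P)\le h+\vartheta_y+r\xi$, $\deg_k(P)\le\vartheta_k+r\mu$. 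The monomial bound \eqref{EQ:Pterm} is then immediate, since a polynomial in $x,y,k$ of degree at most $D$, $E$, $F$ in the respective variables has at most $(D+1)(E+1)(F+1)$ monomials.

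The only step requiring a little care is the $y$-degree estimate: one must be sure that the $y$-leading coefficients of the rising factorials do not collapse (they do not, for the reason noted above) and that $\xi$ has been defined so that both the $i$-weighted and the $(r-i)$-weighted contributions separately stay below $r\xi$ — which is exactly the content of the two-term maximum in Convention~\ref{CON:notation}. Beyond that bookkeeping I do not anticipate any obstacle.
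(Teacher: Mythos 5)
Your proof is correct and follows exactly the same approach as the paper's own (very terse) proof: bound the degrees of $c_i$ and each $P_{i,m}$ factor, combine via \eqref{EQ:P} using the convex-combination estimates $i\cdot(\cdot)+(r-i)\cdot(\cdot)\le r\mu$ (resp.\ $r\xi$), and count monomials. Your version simply spells out the summation over $m$, the role of $S_x^i(p)$, and the ``degree of a sum $\le$ max of degrees'' step, all of which the paper leaves implicit in the phrase ``The lemma follows from \eqref{EQ:P}.''
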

\begin{proof}
Observe that
\begin{align*}
&\deg_x(c_i) \leq d,\quad \deg_y(c_i) \leq h,\quad \deg_k(c_i) = 0,\\
&\deg_x(P_{i,m}) \leq ia_m+ib_m+(r-i)u_m+(r-i)v_m,\\
&\deg_y(P_{i,m}) \leq ia_m\deg_y(A_m)+ib_m\deg_y(B_m)\\
&\qquad\qquad\quad+(r-i)u_m\deg_y(U_m)+(r-i)v_m\deg_y(V_m),\\
&\deg_k(P_{i,m}) \leq ia_m+ib_m+(r-i)u_m+(r-i)v_m
\end{align*}
for all $i = 0,\dots, r$ and $m = 1,\dots, M$. The lemma follows
from \eqref{EQ:P}.
\end{proof}
As suggested in \cite{ChKa2012a}, the degrees of $Y$ in $x,y$ and $k$ will be chosen in such a
way that $QS_k(Y)-RY$ only contains monomials which are expected to
occur in $P$, so that no additional equations will appear.
\begin{lemma}\label{LEM:Ydeg}
With Convention~\ref{CON:notation}, assume that $Y \in C[x,y,k]$
satisfies
\begin{align*}
&\deg_x(Y) \leq \deg_x(P)-\nu,\quad \deg_y(Y) \leq \deg_y(P) - \eta,\\
\text{and}\quad &\deg_k(Y) \leq \deg_k(P) - \nu.
\end{align*}
Then the number of nonzero monomials in terms of $x,y,k$ appearing in 
$P - (QS_k(Y)-RY)$ is bounded by the integer defined in~\eqref{EQ:Pterm}.
\end{lemma}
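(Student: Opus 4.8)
The plan is to bound, for each of the variables $x$, $y$, and $k$ separately, the degree of $QS_k(Y)$ and of $RY$, to check that each is at most the corresponding degree of $P$, and then to conclude with the crude monomial count already recorded in Lemma~\ref{LEM:Pdeg}. The first step is to read off the degrees of $Q$ and $R$ from \eqref{EQ:Q} and \eqref{EQ:R}: each is a product of $\beta$ (for $Q$) or of nothing extra (for $R$) with rising factorials of the affine forms $A_m,B_m,U_m,V_m$, shifted by integer constants. A rising factorial $F^{\overline{t}}$ of such a form has exactly $t$ linear factors, each of $x$-degree and $k$-degree at most $1$ and of $y$-degree equal to $\deg_y$ of the underlying form among $A_m,B_m,U_m,V_m$; hence $\deg_x(F^{\overline{t}})\le t$, $\deg_k(F^{\overline{t}})\le t$, and $\deg_y(F^{\overline{t}})\le t\deg_y(F)$. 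Counting the factors contributed by each of the $M$ blocks and using $\deg_x(\beta)=\deg_k(\beta)=0$ gives $\deg_x(Q),\deg_k(Q)\le\sum_{m=1}^M(a_m'+v_m')$, $\deg_y(Q)\le\deg_y(\beta)+\sum_{m=1}^M(a_m'\deg_y(A_m)+v_m'\deg_y(V_m))$, and similarly $\deg_x(R),\deg_k(R)\le\sum_{m=1}^M(u_m'+b_m')$ and $\deg_y(R)\le\sum_{m=1}^M(u_m'\deg_y(U_m)+b_m'\deg_y(B_m))$. By Convention~\ref{CON:notation}, all four $x$- and $k$-degrees are bounded by $\nu$ and both $y$-degrees are bounded by $\eta$.

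Next I would use that $S_k$ only shifts the variable $k$, hence preserves all three degrees, so $\deg_x(S_k(Y))=\deg_x(Y)$, $\deg_y(S_k(Y))=\deg_y(Y)$, and $\deg_k(S_k(Y))=\deg_k(Y)$. If any of the prescribed bounds $\deg_x(P)-\nu$, $\deg_y(P)-\eta$, $\deg_k(P)-\nu$ is negative, the hypothesis forces $Y=0$, whence $P-(QS_k(Y)-RY)=P$ and the assertion is exactly the monomial count of Lemma~\ref{LEM:Pdeg}; so we may assume all three are nonnegative. Multiplying the degree bounds of the previous paragraph against the hypotheses on $Y$ then yields $\deg_x(QS_k(Y))\le\nu+(\deg_x(P)-\nu)=\deg_x(P)$, and in the same way $\deg_y(QS_k(Y))\le\deg_y(P)$, $\deg_k(QS_k(Y))\le\deg_k(P)$, with the identical three inequalities for $RY$. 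Consequently $P-(QS_k(Y)-RY)$ has $x$-degree at most $\deg_x(P)$, $y$-degree at most $\deg_y(P)$, and $k$-degree at most $\deg_k(P)$.

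Finally, I would substitute the bounds $\deg_x(P)\le d+\vartheta_x+r\mu$, $\deg_y(P)\le h+\vartheta_y+r\xi$, $\deg_k(P)\le\vartheta_k+r\mu$ from Lemma~\ref{LEM:Pdeg}: the polynomial $P-(QS_k(Y)-RY)$ can then involve only monomials in $x,y,k$ of $x$-degree at most $d+\vartheta_x+r\mu$, $y$-degree at most $h+\vartheta_y+r\xi$, and $k$-degree at most $\vartheta_k+r\mu$, of which there are at most $(d+\vartheta_x+r\mu+1)(h+\vartheta_y+r\xi+1)(\vartheta_k+r\mu+1)$, i.e.\ the integer in \eqref{EQ:Pterm}. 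The only step that requires any care is the bookkeeping in the first paragraph: confirming that the rising-factorial exponents $a_m',b_m',u_m',v_m'$ really are the numbers of linear factors in the blocks of $Q$ and $R$, and that the two branches of the maxima defining $\nu$ and $\eta$ in Convention~\ref{CON:notation} are exactly the ones produced by $Q$ and by $R$ respectively. This is a direct inspection of \eqref{EQ:Q}, \eqref{EQ:R} and Convention~\ref{CON:notation}, so I do not expect a genuine obstacle.
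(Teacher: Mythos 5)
Your proposal is correct and follows essentially the same route as the paper's proof: bound the $x$-, $y$-, and $k$-degrees of $Q$ and $R$ by $\nu$ and $\eta$ directly from \eqref{EQ:Q}, \eqref{EQ:R}, and Convention~\ref{CON:notation}, combine with the hypotheses on $Y$ (and the degree-preservation of $S_k$) to conclude that $P-(QS_k(Y)-RY)$ has degrees bounded by those of $P$, and then invoke the monomial count of Lemma~\ref{LEM:Pdeg}. You merely spell out the rising-factorial degree bookkeeping and the degenerate case $Y=0$, which the paper compresses into one line; your careful use of inequalities rather than the paper's stated equalities is also safer, since the equalities can fail when some of the integers $a_m,b_m,u_m,v_m$ vanish.
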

\begin{proof}
It follows from \eqref{EQ:Q} and \eqref{EQ:R} that
\begin{align*}
&\max\{\deg_x(Q),\deg_x(R)\} =
\max\{\deg_k(Q),\deg_k(R)\} = \nu,\\
\text{and}\quad
&\max\{\deg_y(Q),\deg_y(R)\} = \eta.
\end{align*}
Thus the degrees of $P- (QS_k(Y)-RY)$ in $x,y,k$ are bounded above by
those of $P$, yielding the assertion by Lemma~\ref{LEM:Pdeg}.
\end{proof}
Now we are ready to present the main result of this subsection, 
which can be viewed as a natural generalization of the order-degree
curve from \cite[Theorem~5]{ChKa2012a}.
\begin{theorem}\label{THM:hyperodh}
With Convention~\ref{CON:notation}, let $r,d,h\in \set N$ be such
that
$d+\vartheta_x+r\mu-\nu\geq 0$, 
$h+\vartheta_y+r\xi-\eta\geq 0$, $\vartheta_k+r\mu-\nu\geq 0$, and
\begin{align*}
&(r+1)(d+1)(h+1)-(d+1+\vartheta_x+r\mu)(\vartheta_k+r\mu+1)\eta\\
&-(d+2+\vartheta_x+\vartheta_k+2r\mu-\nu)(h+1+\vartheta_y+r\xi-\eta)\nu
>0.
\end{align*}
Then there exists a telescoper for $H$ of order at most~$r$, degree at most~$d$ and
height at most~$h$.
\end{theorem}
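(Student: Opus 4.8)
The plan is to turn equation~\eqref{EQ:gosper} into a homogeneous linear system over~$C$ and count. First I would fix the degree bounds for the ansatz polynomial $Y$: by Lemma~\ref{LEM:Ydeg}, choose $Y\in C[x,y,k]$ with $\deg_x(Y)\le d+\vartheta_x+r\mu-\nu$, $\deg_y(Y)\le h+\vartheta_y+r\xi-\eta$, and $\deg_k(Y)\le \vartheta_k+r\mu-\nu$; the three inequalities assumed in the theorem ensure these bounds are nonnegative, so the ansatz is meaningful. The unknowns are the coefficients $c_0,\dots,c_r$ of~$L$ (each $c_i$ being a polynomial in $x,y$ of degree $\le d$ in~$x$ and $\le h$ in~$y$, contributing $(d+1)(h+1)$ scalars, for a total of $(r+1)(d+1)(h+1)$) together with the coefficients of~$Y$ (contributing $(d+\vartheta_x+r\mu-\nu+1)(h+\vartheta_y+r\xi-\eta+1)(\vartheta_k+r\mu-\nu+1)$ scalars). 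Crucially, by the ``remarkable feature'' noted after~\eqref{EQ:R}, all these unknowns enter \eqref{EQ:gosper} linearly, so equating coefficients of monomials $x^ay^bk^c$ on both sides yields a homogeneous linear system over~$C$ in these unknowns.

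Next I would bound the number of equations, i.e.\ the number of monomials in $x,y,k$ that can occur in $P-(QS_k(Y)-RY)$. Lemma~\ref{LEM:Ydeg} gives exactly this: the count is at most the integer in~\eqref{EQ:Pterm}, namely $(d+\vartheta_x+r\mu+1)(h+\vartheta_y+r\xi+1)(\vartheta_k+r\mu+1)$. A nonzero solution exists whenever the number of unknowns strictly exceeds the number of equations, i.e.\ whenever
\begin{align*}
&(r+1)(d+1)(h+1)+(d+\vartheta_x+r\mu-\nu+1)(h+\vartheta_y+r\xi-\eta+1)(\vartheta_k+r\mu-\nu+1)\\
&\qquad> (d+\vartheta_x+r\mu+1)(h+\vartheta_y+r\xi+1)(\vartheta_k+r\mu+1).
\end{align*}
The remaining task is purely algebraic: expand both products and verify that this inequality is equivalent to the one in the theorem statement. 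Writing $X:=d+\vartheta_x+r\mu+1$, $Z:=\vartheta_k+r\mu+1$, and $W:=h+\vartheta_y+r\xi+1$, the two product terms on the right-hand and left-hand sides are $XWZ$ and $(X-\nu)(W-\eta)(Z-\nu)$; their difference telescopes to $\eta XZ+\nu(X+Z-\nu)(W-\eta)$, which (reintroducing the abbreviations) is exactly $(d+1+\vartheta_x+r\mu)(\vartheta_k+r\mu+1)\eta+(d+2+\vartheta_x+\vartheta_k+2r\mu-\nu)(h+1+\vartheta_y+r\xi-\eta)\nu$. Subtracting this from $(r+1)(d+1)(h+1)$ and requiring positivity reproduces the displayed inequality.

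Finally I would check that a nonzero solution really yields a valid telescoper of the claimed size. Since $Y$ may vanish while $(c_0,\dots,c_r)$ does not, or vice versa, I must argue both cases lead to a genuine telescoper: if $(c_0,\dots,c_r)\neq 0$, then $L=c_0+c_1S_x+\cdots+c_rS_x^r$ is nonzero, has order $\le r$, degree $\le d$, height $\le h$, and the computation preceding~\eqref{EQ:gosperform} together with~\eqref{EQ:gosper} shows $S_k(L(H))/L(H)=S_k(gH)/(gH)$ for $g=Y/R$ (up to the telescoping identity), so $L(H)=S_k(gH)-gH$ and $L$ is a telescoper; if instead $(c_0,\dots,c_r)=0$ but $Y\neq 0$, then \eqref{EQ:gosper} forces $QS_k(Y)=RY$, and Convention~\ref{CON:nonrat} (via the remark that $S_k(gH)/(gH)\neq 1$ for nonzero $g$) rules this out, so this degenerate case cannot arise. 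The main obstacle is the bookkeeping in the second paragraph—matching the raw monomial-count inequality to the precise polynomial in the statement—but it is a routine, if somewhat tedious, expansion; the conceptual content is entirely carried by Lemmas~\ref{LEM:Pdeg} and~\ref{LEM:Ydeg} and by the linearity of the ansatz.
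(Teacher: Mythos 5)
Your proposal matches the paper's proof in every essential: the ansatz for $Y$ governed by Lemma~\ref{LEM:Ydeg}, the variable count $(r+1)(d+1)(h+1)+(d+\vartheta_x+r\mu-\nu+1)(h+\vartheta_y+r\xi-\eta+1)(\vartheta_k+r\mu-\nu+1)$ versus the equation count $(d+\vartheta_x+r\mu+1)(h+\vartheta_y+r\xi+1)(\vartheta_k+r\mu+1)$, and the reduction of the counting inequality to the stated bound. The only differences are expository: the paper compresses your telescoping expansion into the phrase ``by adding and removing $\eta$ \dots along with a direct calculation,'' and it disposes of the nondegeneracy issue (ruling out $L=0$) in the paragraph preceding the theorem via Convention~\ref{CON:nonrat} rather than inside the proof, whereas you spell out both steps explicitly.
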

\begin{proof}
In order to prove the theorem, it is sufficient to show that, for the
given triple $(r,d,h)$ in the assumption, the corresponding equation
$P = QS_k(Y) - RY$ has a nontrivial solution $Y \in C[x,y,k]$.
Lemma~\ref{LEM:Ydeg}, along with Lemma~\ref{LEM:Pdeg}, suggests the ansatz
\[
Y = \sum_{i=0}^{d+\vartheta_x+r\mu-\nu}
\,\sum_{j=0}^{h+\vartheta_y+r\xi-\eta}
\,\sum_{\ell=0}^{\vartheta_k+r\mu-\nu}
y_{i,j,\ell}x^iy^jk^\ell
\]
with undetermined coefficients $y_{i,j,\ell}$. 
Together with the unknown coefficients over $C$ of the $c_i$ in $P$ given by \eqref{EQ:P},
we obtain from the equation \eqref{EQ:gosper} a linear system over $C$ with
\begin{align*}
&(r+1)(d+1)(h+1)\\
&+(d+\vartheta_x+r\mu-\nu+1)(h+\vartheta_y+r\xi-\eta+1)(\vartheta_k+r\mu-\nu+1)
\end{align*}
variables and, by Lemma~\ref{LEM:Ydeg}, at most
\[
(d+\vartheta_x+r\mu+1)(h+\vartheta_y+r\xi+1)(\vartheta_k+r\mu+1)
\]
equations. By adding and removing $\eta$ in the second parenthesis of the number of equations,
along with a direct calculation, one readily sees that the linear system for $r,d,h$ satisfying the 
given constraints has more
variables than equations, ensuring the existence of such a nontrivial
solution. The assertion follows.
\end{proof}
\begin{example}\label{EX:nonrat}
  For the hypergeometric term $H=k\Gamma(x+k+y^2)/\Gamma(x-k+y)$, the following table contains
  the predicted and actual heights of telescopers for various orders and degrees.
  An entry $u|v$ in cell $(r,d)$ means that Thm.~\ref{THM:hyperodh} predicts the existence of a telescoper
  of order~$r$, degree~$d$ and height~$h\geq u$, and that there actually exists a telescoper of order~$r$,
  degree~$d$ and height~$h=v$. A dot indicates that no operator with the corresponding order
  and degree exist or is predicted.
  \[\scriptscriptstyle
  \begin{array}{@{}c@{\kern3pt}|@{\kern3pt}c@{\kern3pt}c@{\kern3pt}c@{\kern3pt}c@{\kern3pt}c@{\kern3pt}c@{\kern3pt}c@{\kern3pt}c@{\kern3pt}c@{\kern3pt}c@{}}
    \llap{$d$ }
    12 & \cdot|\cdot & \cdot|\cdot & 42|9 & 25|6 & 22|6 & 21|5 & \x{22}|5 & 22|5 & 23|4 & 24|4 \\
    11 & \cdot|\cdot & \cdot|\cdot & 50|9 & 27|6 & 24|6 & 23|5 & \x{24}|5 & 24|5 & 25|5 & 26|4 \\
    10 & \cdot|\cdot & \cdot|\cdot & 62|9 & 31|7 & 27|6 & 26|5 & 26|5 & \x{27}|5 & 28|5 & 29|4 \\
    9 & \cdot|\cdot & \cdot|\cdot & 86|9 & 36|7 & 30|6 & 29|5 & 30|5 & \x{30}|5 & 32|5 & 33|5 \\
    8 & \cdot|\cdot & \cdot|\cdot & 158|9 & 45|7 & 36|6 & 35|5 & 35|5 & \x{36}|5 & 37|5 & 39|5 \\
    7 & \cdot|\cdot & \cdot|\cdot & \cdot|9 & 62|7 & 47|6 & 43|6 & 43|5 & \x{44}|5 & 46|5 & 47|5 \\
    6 & \cdot|\cdot & \cdot|\cdot & \cdot|9 & 114|7 & 69|6 & 61|6 & 59|5 & \x{60}|5 & 61|5 & 63|5 \\
    5 & \cdot|\cdot & \cdot|\cdot & \cdot|9 & \cdot|7 & 160|7 & 113|6 & 102|5 & 98|5 & \x{99}|5 & 101|5 \\
    4 & \cdot|\cdot & \cdot|\cdot & \cdot|\cdot & \cdot|7 & \cdot|7 & \cdot|6 & 570|6 & 371|5 & 312|5 & 288|5 \\
    3 & \cdot|\cdot & \cdot|\cdot & \cdot|\cdot & \cdot|10 & \cdot|8 & \cdot|6 & \cdot|6 & \cdot|6 & \cdot|6 & \cdot|6 \\
    2 & \cdot|\cdot & \cdot|\cdot & \cdot|\cdot & \cdot|\cdot & \cdot|\cdot & \cdot|8 & \cdot|8 & \cdot|8 & \cdot|8 & \cdot|8 \\
    1 & \cdot|\cdot & \cdot|\cdot & \cdot|\cdot & \cdot|\cdot & \cdot|\cdot & \cdot|\cdot & \cdot|\cdot & \cdot|\cdot & \cdot|\cdot & \cdot|\cdot \\
    0 & \cdot|\cdot & \cdot|\cdot & \cdot|\cdot & \cdot|\cdot & \cdot|\cdot & \cdot|\cdot & \cdot|\cdot & \cdot|\cdot & \cdot|\cdot & \cdot|\cdot \\\hline
    & 0 & 1 & 2 & 3 & 4 & 5 & 6 & 7 & 8 & 9\rlap{ $r$} \\
  \end{array}
  \]
  Thm.~\ref{THM:hyperodh} inherits from the order-degree curve of~\cite{ChKa2012a} that it correctly predicts
  the orders but overshoots its degrees.
  For example, with $r=2$, the theorem predicts a telescoper of order 2 and degree $d$ 
  whenever $d\geq 8$, and yet we actually found telescopers of order 2 and degree 5, 6 or 7 
  by direct calculation.
  Also the predicted heights overshoot, but note
  that the bound still supports trading order and degree against height, e.g.,
  the height predicted for $(r,d)=(4,6)$ is lower than that for $(r,d)=(4,5)$ and $(r,d)=(3,6)$.
  Note also that
  the predicted heights are not weakly decreasing for fixed $d$ and growing~$r$, as they should
  (leftmost violations are highlighted in bold).
  This is a weakness of Thm.~\ref{THM:hyperodh} caused by the quadratic term in $r$ appearing in the
  inequality. 
\end{example}

\subsection{The rational case}\label{SUBSEC:rat}
In this subsection, we deal with the remaining case where the given
proper hypergeometric term $H$ of the form \eqref{EQ:hyper} can be
further written as $H = fH_0$ for $f\in C(x,y,k)$ and a proper
hypergeometric term $H_0$ with $S_k(H_0)/H_0=1$.  
Let $a,b\in C[x,y,k]$ be such that $S_x(H_0)/H_0 = a/b$. Along similar lines
in the proof of~\cite[Lemma~9]{ChKa2012a}, one can show that whenever 
$f$ admits a telescoper of order $r$, degree $d$ and height $h$, there always
exists a telescoper for $H$ of order $r$, degree at most $d+r\max\{\deg_x(a),\deg_x(b)\}$
and height at most $h+r\max\{\deg_y(a),\deg_y(b)\}$. Based on this transformation,
we may assume without loss of
generality that $H_0 = 1$. In other words, we assume that $H$ is a
proper hypergeometric term of the form \eqref{EQ:hyper} and, at the
same time, is a rational function in $C(x,y,k)$, which is equivalent
to say that $H$ is a rational function whose denominator factors into
integer-linear factors.
Following \cite[\S 4]{ChKa2012a}, we employ the direct algorithm of Le
\cite{Le2003a}, instead of Zeilberger's algorithm, for analyzing sizes of 
telescopers in the rational case so as to obtain a much sharper bound.

Let $H$ be a rational proper hypergeometric term. The algorithm of
Le \cite{Le2003a} first decomposes $H$ as
\begin{equation}\label{EQ:Hdecomp}
H = S_k(g) - g
+ \frac1u\sum_{i=1}^sV_i\Big(\frac1{(a_i x + a_i' k+a_i'')^{e_i}}\Big),
\end{equation}
where $g\in C(x,y,k)$, $u\in C[x,y]$, $V_i\in C[x,y][S_x]$,
$a_i,a_i',e_i\in \set Z$, $a_i''\in C[y]$ with $a_i'>0, e_i>0$,
$\gcd(a_i,a_i') = 1$ for all $i$, and
\[
\Big(\frac{a_i}{a_i'}-\frac{a_j}{a_j'}\Big)x +
\Big(\frac{a_i''}{a_i'}-\frac{a_j''}{a_j'}\Big)
\notin \set Z
\]
for all $i\neq j$ with $e_i=e_j$. Then for $i = 1,\dots,s$, it
computes an operator $L_i\in C(x,y)[S_x]$ such that $S_x^{a_i'}-1$ is
a right divisor of $L_i(\frac1uV_i)$. A common left multiple $L\in
C[x,y][S_x]$ of the operators $L_1,\dots,L_s$ finally leads to a
telescoper for $H$.

Note that the main computational work happens in the last two steps.
It is sensible to assume in the following degree analysis that we
already know the decomposition \eqref{EQ:Hdecomp} and to express the
bounds in terms of quantities from the last term in \eqref{EQ:Hdecomp},
rather than from $H$.

\begin{theorem}\label{THM:ratodh}
Let $H$ be a rational proper hypergeometric term admitting the
decomposition \eqref{EQ:Hdecomp}. For $i = 1,\dots,s$, assume that
$V_i$ has degree $\vartheta_i$ and height $\tau_i$. Let $r,d,h\in \set N$
be such that $d\geq \deg_x(u)$, $h\geq \deg_y(u)$, and 
\begin{align*}
&(r+1)(d+1-\deg_x(u))(h+1-\deg_y(u))-\sum_{i=1}^sa_i'\vartheta_i\tau_i \\
&-(d+1-\deg_x(u))\sum_{i=1}^sa_i'\tau_i
-(h+1-\deg_y(u))\sum_{i=1}^sa_i'\vartheta_i\\
&-(d+1-\deg_x(u))(h+1-\deg_y(u))\sum_{i=1}^sa_i'
>0
\end{align*}
Then there exists a telescoper $L$ for $H$ of order at most~$r$, degree at most~$d$
and height at most~$h$.
\end{theorem}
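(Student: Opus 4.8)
The plan is to follow the structure of the algorithm of Le \cite{Le2003a} recalled just before the statement, reducing the construction of a telescoper to a single common-left-multiple computation for which Theorem~\ref{thm:lclm} (in its single-variable-$x$ specialization) can be invoked. Concretely, given the decomposition \eqref{EQ:Hdecomp}, summing a telescoper over the $S_k(g)-g$ part contributes nothing, so it suffices to annihilate $\frac1u\sum_{i=1}^s V_i\bigl(\frac1{(a_ix+a_i'k+a_i'')^{e_i}}\bigr)$. For each $i$ the operator $L_i\in C(x,y)[S_x]$ with $S_x^{a_i'}-1$ a right divisor of $L_i(\frac1u V_i)$ exists by Le's construction, and a telescoper for $H$ arises from any common left multiple $L\in C[x,y][S_x]$ of $L_1,\dots,L_s$, after clearing the denominator~$u$. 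So the task is to produce good order/degree/height bounds for the $L_i$ and then feed them into Theorem~\ref{thm:lclm}.

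First I would pin down the shapes of the $L_i$. The key observation — this is where the parameters $a_i',\vartheta_i,\tau_i$ enter — is that to kill the single summand $V_i\bigl(\frac1{(a_ix+a_i'k+a_i'')^{e_i}}\bigr)$ modulo $S_k(g)-g$ one needs an operator $L_i$ of order $a_i'$ whose coefficients are, up to harmless factors, essentially $S_x$-shifts of the coefficients of~$V_i$; hence $\deg_x(L_i)\le \vartheta_i$ and $\deg_y(L_i)\le \tau_i$ and $\deg_{S_x}(L_i)=a_i'$. This is exactly the bookkeeping done in the order-degree analysis of \cite[\S4]{ChKa2012a}; I would re-run it tracking the $\deg_y$ of every quantity in parallel with its $\deg_x$, using that all the $a_i''\in C[y]$ and that $\sigma=S_x$ fixes $C[y]$, so shifting in $x$ never raises the $y$-degree. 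The one subtlety is the common denominator $u$: multiplying through by $u$ to land back in $C[x,y][S_x]$ raises the degree by $\deg_x(u)$ and the height by $\deg_y(u)$, which is precisely why the theorem's inequality is written in the shifted variables $d-\deg_x(u)$ and $h-\deg_y(u)$, and why the hypotheses $d\ge\deg_x(u)$, $h\ge\deg_y(u)$ appear.

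With $r_i=a_i'$, $d_i=\vartheta_i$, $h_i=\tau_i$ in hand, I would apply Theorem~\ref{thm:lclm} to $L_1,\dots,L_s$ with target parameters $(r,\,d-\deg_x(u),\,h-\deg_y(u))$; since there is no $y$-variable playing the role of the old $x$ here — rather the second variable $y$ is the ``height'' variable and there is no third — one uses the theorem with the understanding that $C[x,y][\partial]$ there becomes $C[y][x][S_x]$ here. Plugging $(r,d-\deg_x(u),h-\deg_y(u))$ and the $(a_i',\vartheta_i,\tau_i)$ into the polynomial inequality of Theorem~\ref{thm:lclm} and discarding the (nonnegative) terms that are not present because the degree-in-the-shift-of-$y$ data is absorbed, one obtains exactly the displayed inequality; then Theorem~\ref{thm:lclm} hands us a common left multiple of $L_1,\dots,L_s$ of order $\le r$, $\deg_x\le d-\deg_x(u)$, $\deg_y\le h-\deg_y(u)$. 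Multiplying by $u$ gives the desired telescoper of order $\le r$, degree $\le d$, height $\le h$, and it is nonzero because a common left multiple of nonzero operators over the Ore field $C(x,y)[S_x]$ is nonzero.

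The main obstacle I expect is the bound on the $L_i$ themselves: one must check that Le's construction of an $L_i$ with $S_x^{a_i'}-1\mid L_i(\frac1u V_i)$ can be carried out with $\deg_x(L_i)\le\vartheta_i$ and, crucially, $\deg_y(L_i)\le\tau_i$, i.e.\ that the elimination of $k$ never forces the $y$-degree to grow. I would handle this by an explicit ansatz $L_i=\sum_{t=0}^{a_i'} c_{i,t}S_x^t$ with $\deg_x c_{i,t}\le\vartheta_i$, $\deg_y c_{i,t}\le\tau_i$ and a linear-algebra argument — the relevant linear system over $C(y)[x]$ (equivalently, over $C$ after a further coefficient comparison in $x$ and $y$) is square or underdetermined — mirroring the counting argument already used in the proofs of Lemma~\ref{LEM:Pdeg}–Theorem~\ref{THM:hyperodh}. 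Everything else is the routine degree arithmetic of combining \eqref{EQ:Hdecomp}, the shift $S_x$, the factor $u$, and the inequality of Theorem~\ref{thm:lclm}.
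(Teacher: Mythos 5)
Your proposed route---first construct each $L_i$ with $S_x^{a_i'}-1$ a right divisor of $L_i(\frac1u V_i)$ and then take a common left multiple via Theorem~\ref{thm:lclm}---is not the route the paper takes, and more importantly it does not yield the stated inequality. The paper's proof does a single joint linear-algebra ansatz: it fixes one operator $\tilde L$ of order $r$, degree $d-\deg_x(u)$, height $h-\deg_y(u)$, together with auxiliary operators $R_i$ of order $r+\rho_i-a_i'$, degree $d-\deg_x(u)+\vartheta_i$, height $h-\deg_y(u)+\tau_i$ (with $\rho_i=\deg_{S_x}V_i$), and solves the $s$ equations $\tilde L V_i = R_i(S_x^{a_i'}-1)$ simultaneously. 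Writing $D=d-\deg_x(u)$, $H=h-\deg_y(u)$, the variable count is $(r+1)(D+1)(H+1)+\sum_i(r+\rho_i-a_i'+1)(D+\vartheta_i+1)(H+\tau_i+1)$ and the equation count is $\sum_i(r+\rho_i+1)(D+\vartheta_i+1)(H+\tau_i+1)$; the difference is exactly $(r+1)(D+1)(H+1)-\sum_i a_i'(D+\vartheta_i+1)(H+\tau_i+1)$, which expands to precisely the displayed inequality. No per-$L_i$ shape bounds are needed.

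There are two concrete gaps in your argument. First, the intermediate claim that each $L_i$ can be chosen with $\deg_{S_x}(L_i)=a_i'$, $\deg_x(L_i)\leq\vartheta_i$, $\deg_y(L_i)\leq\tau_i$ is not established and is not something Le's construction guarantees; indeed in the paper's counting the parameters $\vartheta_i,\tau_i$ are the degree and height of $V_i$ and $a_i'$ is the \emph{order deficit} between $\tilde L V_i$ and $R_i$, not the shape of some $L_i$. Second, even granting your shape claim, plugging $(r_\ell,d_\ell,h_\ell)=(a_i',\vartheta_i,\tau_i)$ with targets $(r,D,H)$ into Theorem~\ref{thm:lclm} gives
$\sum_i(r+1-a_i')(D+1-\vartheta_i)(H+1-\tau_i)>(s-1)(r+1)(D+1)(H+1)$,
whose expansion contains the terms $+(H+1)\sum a_i'\vartheta_i$ and $+(D+1)\sum a_i'\tau_i$ with the \emph{opposite} sign to the $-(H+1-\deg_y u)\sum a_i'\vartheta_i$ and $-(D+1-\deg_x u)\sum a_i'\tau_i$ in the theorem's inequality, plus extra terms $-(r+1)(D+1)\sum\tau_i$, $-(r+1)(H+1)\sum\vartheta_i$, $+(r+1)\sum\vartheta_i\tau_i$ that have no counterpart there. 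These are not nonnegative terms that can be ``discarded''; the two inequalities are incomparable. (Note also that the lclm route implicitly requires $r\geq a_i'$, $D\geq\vartheta_i$, $H\geq\tau_i$, hypotheses which the theorem does not impose.) So the proposal as written does not prove Theorem~\ref{THM:ratodh}.
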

\begin{proof}
According to the algorithm of Le \cite{Le2003a}, it suffices to find
some $L\in C[x,y][S_x]$ and $R_i\in C(x,y)[S_x]$ with the property
that $L(\frac1uV_i) = R_i(S_x^{a_i'}-1)$ for all $i=1,\dots,s$.

Let $\rho_i$ be the order of $V_i$ for $i = 1,\dots,s$ and make an
ansatz $L = \tilde L u$ with
\[
\tilde L = \sum_{i=0}^r\sum_{j=0}^{d-\deg_x(u)}\sum_{\ell=0}^{h-\deg_y(u)}
c_{i,j,\ell}y^\ell x^jS_x^i.
\]
Then $L$ has order $r$, degree $d$, height $h$, and $L\frac1uV_i
= \tilde L V_i$ for $i = 1,\dots,s$. It thus amounts to constructing
operators $R_i\in C[x,y][S_x]$ such that $\tilde LV_i=R_i(S_x^{a_i'}-1)$.
Note that $\tilde LV_i$ has order $r+\rho_i$, degree
$d-\deg_x(u)+\vartheta_i$ and height $h-\deg_y(u)+\tau_i$. Also note
that $S_x^{a_i'}-1$ has order $a_i'$, degree 0 and height 0. So we
consider ansatzes for the $R_i$ of order $r+\rho_i-a_i'$, degree
$d-\deg_x(u)+\vartheta_i$ and height $h-\deg_y(u)+\tau_i$, respectively.

Comparing coefficients with respect to $x,y$ and $S_x$ in all the
required identities $\tilde LV_i = R_i(S_x^{a_i'}-1)$ leads to a
linear system with
\begin{align*}
&(r+1)(d-\deg_x(u)+1)(h-\deg_y(u)+1)\\
&+\sum_{i=1}^s(r+\rho_i-a_i'+1)(d-\deg_x(u)+\vartheta_i+1)(h-\deg_y(u)+\tau_i+1)
\end{align*}
variables and
\[
\sum_{i=1}^s(r+\rho_i+1)(d-\deg_x(u)+\vartheta_i+1)(h-\deg_y(u)+\tau_i+1)
\]
equations. Since $r,d,h$ satisfy the inequality given in the theorem,
the number of variables exceeds the number of equations, and thus the
resulting linear system has a nontrivial solution.
\end{proof}
\begin{example}\label{EX:rat}
  For the rational function
  \[
  H=\frac1{x+y}(((y+1)^3+1)S_x-(x+y+y^3))(\frac1{x+3k+y}),
  \]
  the heights of telescopers predicted by Thm.~\ref{THM:ratodh} and
  the heights of telescopers actually observed are as follows.
  \[
\scriptscriptstyle
  \begin{array}{@{}c|c@{\kern5pt}c@{\kern5pt}c@{\kern5pt}c@{\kern5pt}c@{\kern5pt}c@{\kern5pt}c@{\kern5pt}c@{\kern5pt}c@{\kern5pt}c@{\kern5pt}c@{\kern5pt}c@{\kern5pt}c@{\kern5pt}c@{\kern5pt}c@{\kern5pt}c@{\kern5pt}c@{}}
      \llap{$d$ }
    7& \cdot|\cdot  & \cdot|\cdot  & \cdot|\cdot  & 19|9 & 7|4 & 5|3 & 3|1 & 3|1 & 2|1 & 2|1 & 2|1 & 2|1 & 2|1 \\
    6& \cdot|\cdot  & \cdot|\cdot  & \cdot|\cdot  & 22|9 & 8|4 & 5|3 & 4|1 & 3|1 & 2|1 & 2|1 & 2|1 & 2|1 & 2|1 \\
    5&\cdot|\cdot  & \cdot|\cdot & \cdot|\cdot  & 28|9 & 8|4 & 5|3 & 4|1 & 3|1 & 3|1 & 2|1 & 2|1 & 2|1 & 2|1 \\
    4& \cdot|\cdot  & \cdot|\cdot  & \cdot|\cdot  & 46|9 & 10|4 & 6|3 & 4|1 & 3|1 & 3|1 & 2|1 & 2|1 & 2|1 & 2|1 \\
    3& \cdot|\cdot  & \cdot|\cdot  & \cdot|\cdot  & \cdot|\cdot  & 13|4 & 7|3 & 5|1 & 4|1 & 3|1 & 3|1 & 2|1 & 2|1 & 2|1 \\
    2& \cdot|\cdot  & \cdot|\cdot  & \cdot|\cdot  & \cdot|\cdot  & 28|4 & 10|3 & 6|1 & 4|1 & 4|1 & 3|1 & 3|1 & 2|1 & 2|1 \\
    1& \cdot|\cdot  & \cdot|\cdot  & \cdot|\cdot  & \cdot|\cdot  & \cdot|\cdot  & \cdot|\cdot  & 19|1  & 10|1 &
      7|1 & 5|1 & 4|1 & 4|1 & 3|1 \\
    0& \cdot|\cdot  & \cdot|\cdot  & \cdot|\cdot  & \cdot|\cdot  & \cdot|\cdot  & \cdot|\cdot  & \cdot|\cdot  &
      \cdot|\cdot  & \cdot|\cdot  & \cdot|\cdot  & \cdot|\cdot  & \cdot|\cdot  & \cdot|\cdot  \\\hline
    & 0 & 1 & 2 & 3 & 4 & 5 & 6 & 7 & 8 & 9 & 10 & 11 & 12 \rlap{ $r$}
  \end{array}
  \]
  Thm.~\ref{THM:ratodh} predicts orders and degrees correctly but the bound on the height overshoots.
  At least in this example, while the prediction supports trading order and degree against height, the actual
  operators seem to only allow for trading order against height. This might be a general phenomenon when $s=1$.
\end{example}

\section{Contraction Ideals}\label{SEC:desing}

\def\con{\operatorname{Con}}%

In~\cite{CJKS2013}, it has been shown that order-degree curves are caused by so-called
removable singularities. This analysis is more general than the results obtained
in the previous sections in so far as it is not limited to operators obtained by
a certain operation, e.g., creative telescoping, but applies to any operator ideal.
At the same time, the ``general'' result below does not include the results stated
above as special cases, because it depends on different assumptions on what is known
about the operators at hand.

In the present section, we make an attempt at extending the analysis of \cite{CJKS2013}
to order-degree-height surfaces. Although there is some theory about the
removability of singularities in Ore algebras of the form $C[x,y][\partial]$~\cite{zhang16,zhang17,CKLZ2019},
we will formulate our result in slightly different terms. For an operator
$L\in C[x,y][\partial]$, let $\<L>$ denote the ideal it generates in
$C(x,y)[\partial]$. Then $\con\<L>:=\<L>\cap C[x,y][\partial]$ is an ideal
of $C[x,y][\partial]$, called the \emph{contraction ideal} of~$\<L>$.
We assume that some elements of the contraction ideal $\con\<L>$ are given
and study how they give rise to order-degree-height surfaces about
the elements of~$\con\<L>$.

The construction proposed in~\cite{CJKS2013} can be summarized as follows.
Suppose we know two operators $L\in C[x][\partial]$ and $L_1\in\con\<L>$.
The goal is an estimate relating the orders $r$ and the degrees $d$ of the elements of~$\con\<L>$.
Suppose that $\deg_\partial(L_1)>\deg_\partial(L)$ and let $p\in C[x]$ and $P\in C[x][\partial]$ be such that $pL_1=PL$.
Suppose further that $\deg_x(p)>\deg_x(\lc_\partial(P))$.
In order to search for elements of $\con\<L>$ of order~$r$ and degree~$d$, make an ansatz
$(Q_0+Q_1\frac1pP)L$ with some operators $Q_0,Q_1\in C[x][\partial]$ with
\begin{alignat*}1
  \deg_\partial(Q_0) &\leq r - \deg_\partial(L),\\
  \deg_x(Q_0) &\leq \deg_x(P) - 1,\\
  \deg_\partial(Q_1) &\leq r - \deg_\partial(L_1),\\
  \deg_x(Q_1) &\leq \deg_x(p) - \deg_x(\lc_\partial(P)) - 1.
\end{alignat*}
Then $\deg_x(Q_0+Q_1\frac1pP)\leq\deg_x(P)-1$, so if we equate the coefficients of
all terms of degree $>d-\deg_x(L)$ to zero, we obtain
\[
  (r-\deg_\partial(L)+1)(\deg_x(P)-1-d+\deg_x(L))
\]
linear constraints on the
\begin{alignat*}1
  &(r-\deg_\partial(L)+1)\deg_x(P)\\
  &+ (r-\deg_\partial(L_1)+1)(\deg_x(p) - \deg_x(\lc_\partial(P)))
\end{alignat*}
undetermined coefficients of $Q_0$ and~$Q_1$. The linear system is underdetermined if
$r\geq\deg_\partial(L)$ and 
\[
d\geq\deg_x(L) -
\Bigl(1 - \frac{\deg_\partial(L_1)-\deg_\partial(L)}{r+1-\deg_\partial(L)}\Bigr)
\bigl(\deg_x(p) - \deg_x(\lc_\partial(P))\bigr).
\]
This matches the formula in Thm.~9 of~\cite{CJKS2013} for the case $m=1$.
Observe that every nonzero solution of the linear system gives rise to a nonzero
operator $Q_0+Q_1\frac1pP$, because the degree restrictions imposed in the ansatz
imply that the leading coefficient of $Q_1\frac1pP$ is a proper rational function while 
$Q_0$ has only polynomial coefficients, and then
$Q_0+Q_1\frac1pP=0\Rightarrow Q_0=Q_1=0$.

For order-degree curves, the argument just sketched generalizes to the case where for
the given $L\in C[x][\partial]$ we know several operators $L_1,\dots,L_m\in\con\<L>$.
However, the general proof given in~\cite{CJKS2013} makes use of the fact that $C[x]$ is a
Euclidean domain. If we turn to operators in $C[x,y][\partial]$ we are faced with the
problem that $C[x,y]$ is not a Euclidean domain. This makes it less clear how the
ansatz must be shaped in order to ensure that a nonzero solution of the linear system
translates into a nonzero ideal element. Indeed, it seems impossible to formulate a
general ansatz whose shape only depends on the orders, degrees, and heights of
the operators involved. Instead, we must also take into account how many syzygies
there are between the leading coefficients of the operators. This quantity, for which
we introduce the following notation, will appear in our order-degree-height surface
formula.

\begin{defi}\label{def:syz}
  For polynomials $u_1,\dots,u_m\in C[x,y]$, let
  \begin{alignat*}1
    &\syz(u_1,\dots,u_m)\\
    &:=\{(q_1,\dots,q_m)\in C[x,y]^m | q_1u_1 + \cdots + q_mu_m=0\}
  \end{alignat*}
  denote their syzygy module. 
  Let $p_1,\dots,p_m\in C[x,y]\setminus\{0\}$ and $P_1,\dots,P_m\in C[x,y][\partial]\setminus\{0\}$.
  Suppose that $\deg_\partial(P_1)\leq\cdots\leq\deg_\partial(P_m)$.
  For $n\in\set N$, let $m_n\in\{1,\dots,m\}$ be maximal such that $\deg_\partial(P_{m_n})\leq n$,
  let $u_\ell:=\sigma^{n-\deg_\partial(P_\ell)}\Bigl(\frac{\lc_\partial(P_\ell)}{p_\ell}\Bigr)$
  for $\ell=1,\dots,m_n$ and define 
  \begin{alignat*}1
    V_n = \Bigl\{ &(q_1,\dots,q_{m_n})\in\syz(u_1,\dots,u_{m_n}) : \\
     & \deg_x q_\ell \leq\deg_x p_\ell - \deg_x\lc_\partial(P_\ell)\text{ and }\\
     & \deg_y q_\ell \leq\deg_y p_\ell - \deg_y\lc_\partial(P_\ell)\text{ for all $\ell$} \Bigr\}.
  \end{alignat*}
  We define $c_n:=\dim_C V_n$.
\end{defi}

By the following lemma, the sequence $c_0,c_1,\dots$ stabilizes at $n=\max_{\ell=1}^m\deg_\partial(P_\ell)$.
In particular, for every operator $L$ and every choice $p_1,\dots,p_m$ and $P_1,\dots,P_m$ as in Def.~\ref{def:syz},
there is a constant $c$ such that $\sum_{n=0}^{r - \deg_\partial(L)} c_n\leq (r-\deg_\partial(L)+1) c$.

\begin{lemma}
  Let $p_1,\dots,p_m$ and $P_1,\dots,P_m$ be as in Def.~\ref{def:syz}.
  If $n\in\set N$ is such that $m_n=m_{n+1}$, then $c_n=c_{n+1}$.
\end{lemma}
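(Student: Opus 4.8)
The plan is to show that the finite-dimensional spaces $V_n$ and $V_{n+1}$ are in fact isomorphic as $C$-vector spaces under the hypothesis $m_n = m_{n+1}$, so that their dimensions coincide. Write $m := m_n = m_{n+1}$; by definition of $m_n$ this means $\deg_\partial(P_\ell)\le n$ for $\ell=1,\dots,m$ and $\deg_\partial(P_\ell)>n+1$ for $\ell>m$. For $\ell=1,\dots,m$ set $u_\ell^{(n)}:=\sigma^{n-\deg_\partial(P_\ell)}(\lc_\partial(P_\ell)/p_\ell)$ and $u_\ell^{(n+1)}:=\sigma^{n+1-\deg_\partial(P_\ell)}(\lc_\partial(P_\ell)/p_\ell)$, so that $u_\ell^{(n+1)} = \sigma(u_\ell^{(n)})$. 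The key observation is that $\sigma$ is an automorphism of $C(x,y)$ (indeed of $R$) which, by our standing assumptions, preserves $\deg_x$ and $\deg_y$ exactly, hence preserves the degrees in $x$ and $y$ of every polynomial and of every rational function.

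First I would check that the degree bounds defining $V_n$ and $V_{n+1}$ are literally the same: the bounds $\deg_x q_\ell\le\deg_x p_\ell-\deg_x\lc_\partial(P_\ell)$ and $\deg_y q_\ell\le\deg_y p_\ell-\deg_y\lc_\partial(P_\ell)$ involve only $p_\ell$ and $P_\ell$, not $n$, so for indices $\ell\le m$ the constraint is independent of whether we are looking at level $n$ or $n+1$. Thus the only thing that changes between $V_n$ and $V_{n+1}$ is the syzygy condition: at level $n$ we require $\sum_{\ell=1}^m q_\ell u_\ell^{(n)}=0$, while at level $n+1$ we require $\sum_{\ell=1}^m q_\ell u_\ell^{(n+1)}=0$, i.e. $\sum_{\ell=1}^m q_\ell\,\sigma(u_\ell^{(n)})=0$.

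Next I would define the candidate isomorphism $\Phi\colon V_n\to V_{n+1}$ by $\Phi(q_1,\dots,q_m)=(\sigma(q_1),\dots,\sigma(q_m))$. If $(q_1,\dots,q_m)\in V_n$ then applying the ring homomorphism $\sigma$ to the relation $\sum q_\ell u_\ell^{(n)}=0$ gives $\sum\sigma(q_\ell)\sigma(u_\ell^{(n)})=\sum\sigma(q_\ell)u_\ell^{(n+1)}=0$, so $(\sigma(q_1),\dots,\sigma(q_m))$ lies in the syzygy module at level $n+1$; and since $\sigma$ preserves $\deg_x$ and $\deg_y$, the degree bounds are preserved too, so $\Phi(q_1,\dots,q_m)\in V_{n+1}$. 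Because $\sigma$ is an automorphism of $R$ (and restricts to a $C$-linear bijection on each space of polynomials of bounded bidegree), $\Phi$ is $C$-linear and injective, and the same argument applied to $\sigma^{-1}$ produces a two-sided inverse $V_{n+1}\to V_n$. Hence $\Phi$ is an isomorphism and $c_n=\dim_C V_n=\dim_C V_{n+1}=c_{n+1}$.

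The only subtle point — and the one I would be most careful about — is making sure the $C$-linearity and degree-preservation of $\sigma$ is used correctly on \emph{rational} functions: $\lc_\partial(P_\ell)/p_\ell$ is a priori only in $C(x,y)$, but $\sigma$ extends to an automorphism of $C(x,y)$ with $\deg_x(\sigma(f))=\deg_x(f)$, $\deg_y(\sigma(f))=\deg_y(f)$ for all $f\in C(x,y)$ (this follows from the exact degree-preservation on $C[x,y]$ together with the definition of the degree of a quotient), so $u_\ell^{(n+1)}=\sigma(u_\ell^{(n)})$ is legitimate and the syzygy module is genuinely transported by $\sigma$. I expect no real obstacle beyond bookkeeping; the heart of the matter is simply that $\sigma$ is a bidegree-preserving automorphism, so shifting the level $n$ by one merely applies $\sigma$ to everything and changes nothing dimension-theoretically once $m_n$ has stabilized.
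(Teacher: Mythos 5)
Your proof is correct and follows essentially the same approach as the paper: both exploit that $\sigma$ is a degree-preserving automorphism so that applying $\sigma$ componentwise maps $V_n$ bijectively onto $V_{n+1}$, using $u_\ell^{(n+1)}=\sigma(u_\ell^{(n)})$ and the $n$-independence of the degree bounds. The paper states the two biconditionals (for the syzygy condition and the degree bounds) without packaging them into an explicit map $\Phi$, but the underlying argument is the same.
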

\begin{proof}
  Since $\sigma$ is an automorphism, we have
  \begin{alignat*}1
    &(q_1,\dots,q_{m_n})\in\syz(u_1,\dots,u_{m_n})\\
    &\iff(\sigma(q_1),\dots,\sigma(q_{m_n}))\in\syz(\sigma(u_1),\dots,\sigma(u_{m_n}))
  \end{alignat*}
  for all $q_1,\dots,q_{m_n}$. Since $\sigma$ is also assumed to preserve degrees, 
  \[
  \deg q_\ell\leq\deg p_\ell-\deg\lc_\partial(P_\ell)
  \iff 
  \deg\sigma(q_\ell)\leq\deg p_\ell-\deg\lc_\partial(P_\ell)
  \]
  for all $\ell$ and every choice of~$q_\ell$, both for $\deg=\deg_x$ and $\deg=\deg_y$.
  Together with the assumption $m_n=m_{n+1}$, it follows that $\dim_C V_n=\dim_C V_{n+1}$. 
\end{proof}

\begin{thm}\label{thm:contraction}
  Let $L\in C[x,y][\partial]$ and let $L_1,\dots,L_m\in\con\<L>\subseteq C[x,y][\partial]$.
  For all $\ell=1,\dots,m$, let $p_\ell\in C[x,y]\setminus\{0\}$ and $P_\ell\in C[x,y][\partial]$
  be such that $p_\ell L_\ell=P_\ell L$, the $p_\ell$ are pairwise coprime, 
  $\deg_x p_\ell >\deg_x\lc_\partial(P_\ell)$ and $\deg_y p_\ell >\deg_y\lc_\partial(P_\ell)$. 
  Define $\lambda_{x,\ell}$ and $\lambda_{y,\ell}$ by
  \begin{alignat*}1
    \lambda_{x,\ell} &:= \deg_x p_\ell - \deg_x \lc_\partial(P_\ell),\\
    \lambda_{y,\ell} &:= \deg_y p_\ell - \deg_y \lc_\partial(P_\ell)
  \end{alignat*}
  for all $\ell$, and let
  \begin{alignat*}3
    \eta_x&=\max_{\ell=1}^m(\deg_x P_\ell - \deg_x\lc_\partial(P_\ell)),&\quad\mu_x&=\sum_{\ell=1}^m\deg_x(p_\ell),\\
    \eta_y&=\max_{\ell=1}^m(\deg_y P_\ell - \deg_y\lc_\partial(P_\ell)),&\quad\mu_y&=\sum_{\ell=1}^m\deg_y(p_\ell),\\
    \xi_x&=\sum_{\ell=1}^m(\deg_\partial(L_\ell){-}1)\deg_x(p_\ell),&&\null\kern-3em
    \xi_y =\sum_{\ell=1}^m(\deg_\partial(L_\ell){-}1)\deg_y(p_\ell).
  \end{alignat*}
  Let the numbers $c_0,c_1,\dots$ be as in Def.~\ref{def:syz}.

  Let $r,d,h\in\set N$ be such that $r\geq\deg_\partial(L),\deg_\partial(L_1),\dots,\deg_\partial(L_m)$ and
  \begin{alignat*}1
    &(r-\deg_\partial(L)+1)\Bigl(\tilde\eta_x\tilde\eta_y+\sum_{\ell=1}^m \lambda_{x,\ell}\lambda_{y,\ell}\\
    &\quad - (\tilde\eta_x+r\mu_x-\xi_x)(\tilde\eta_y+r\mu_y-\xi_y)\\
    &\quad + (d - \deg_x(L) + 1+r\mu_x-\xi_x)(h - \deg_y(L) + 1+r\mu_y-\xi_y)\Bigr)\\
    &> \sum_{\ell=1}^m \deg_\partial(P_\ell)\lambda_{x,\ell}\lambda_{y,\ell}+ \max_{n=0}^{r - \deg_\partial(L)} c_n,
  \end{alignat*}
  where $\tilde\eta_x=\max(\eta_x,d-\deg_x(L)+1)$ and $\tilde\eta_y=\max(\eta_y,h-\deg_y(L)+1)$.
  Then there exists a $Q\in C(x,y)[\partial]$ such that $QL\in C[x,y][\partial]$ and
  $QL\neq0$ and $\deg_\partial(QL)\leq r$ and $\deg_x(QL)\leq d$ and $\deg_y(QL)\leq h$.
\end{thm}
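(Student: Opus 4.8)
The plan is to mimic the single-operator construction sketched before Definition~\ref{def:syz}, but with an ansatz rich enough to absorb the syzygies between the leading coefficients of the $P_\ell$. Concretely, I would look for a $Q$ of the form
\[
Q = Q_0 + \sum_{\ell=1}^m Q_\ell\tfrac1{p_\ell}P_\ell,
\]
where $Q_0\in C[x,y][\partial]$ carries only polynomial coefficients and the $Q_\ell\in C[x,y][\partial]$ are constrained degreewise by $\deg_x(Q_\ell)\le\lambda_{x,\ell}-1$, $\deg_y(Q_\ell)\le\lambda_{y,\ell}-1$, and $\deg_\partial(Q_\ell)\le r-\deg_\partial(L_\ell)$. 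Since $p_\ell L_\ell=P_\ell L$, each summand is a left multiple of $L$, so $QL\in C[x,y][\partial]$ once the coefficients clear denominators; and the point of the degree bounds on $Q_\ell$ is that the rational parts $Q_\ell\frac1{p_\ell}P_\ell$ have leading (in $\partial$) coefficient that is a \emph{proper} rational function in each of $x$ and $y$, whereas $Q_0$ is polynomial. This is what will let me recover $QL\neq0$ from a nonzero solution of the linear system — but only after accounting for cancellation \emph{among} the $Q_\ell$ parts themselves at each $\partial$-degree $n$, which is exactly where the spaces $V_n$ and the numbers $c_n$ of Definition~\ref{def:syz} enter.

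**Counting.** First I would tally the free parameters: $Q_0$ contributes $(r-\deg_\partial(L)+1)$ times a block of $x,y$-coefficients of size governed by $\tilde\eta_x\tilde\eta_y$ (the degree of $Q_0$ in $x$ may be taken up to $\tilde\eta_x-1$, similarly in $y$, matching the single-operator case where it was $\deg_x(P)-1$), and each $Q_\ell$ contributes $(r-\deg_\partial(L_\ell)+1)\lambda_{x,\ell}\lambda_{y,\ell}$ parameters. Rewriting $\sum_\ell(r-\deg_\partial(L_\ell)+1)\lambda_{x,\ell}\lambda_{y,\ell}=(r+1-?)\dots$ — more precisely splitting off $\sum_\ell\deg_\partial(P_\ell)\lambda_{x,\ell}\lambda_{y,\ell}$ from $(r+1)\sum_\ell\lambda_{x,\ell}\lambda_{y,\ell}$ via $\deg_\partial(P_\ell)=\deg_\partial(L_\ell)+\deg_\partial(L)$ bookkeeping — gives the two terms $\sum_\ell\lambda_{x,\ell}\lambda_{y,\ell}$ and $-\sum_\ell\deg_\partial(P_\ell)\lambda_{x,\ell}\lambda_{y,\ell}$ appearing in the theorem. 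Then I equate to zero all coefficients of $QL$ of $x$-degree $>d-\deg_x(L)$ or $y$-degree $>h-\deg_y(L)$; the number of such constraints, using that $\deg_x(Q\cdot)$ is controlled by $\tilde\eta_x+r\mu_x-\xi_x$ and similarly in $y$ (the $r\mu$ and $\xi$ terms come from clearing the $p_\ell$ and from the $P_\ell$-degrees accumulated over $\deg_\partial$ many shifts, exactly as in the creative-telescoping lemmas), works out to $(r-\deg_\partial(L)+1)$ times $\bigl((\tilde\eta_x+r\mu_x-\xi_x)(\tilde\eta_y+r\mu_y-\xi_y) - (d-\deg_x(L)+1+r\mu_x-\xi_x)(h-\deg_y(L)+1+r\mu_y-\xi_y)\bigr)$. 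So the displayed strict inequality says precisely: free parameters minus constraints exceeds $\max_n c_n$.

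**The kernel argument.** The reason I need the surplus to beat $\max_n c_n$ rather than just be positive: a nonzero solution of the linear system is a tuple $(Q_0,Q_1,\dots,Q_m)$, and I must exclude the possibility that it yields $Q=0$ (equivalently $QL=0$). If $Q=0$ then, looking at each $\partial$-degree $n$ separately and at the leading (in $\partial$) coefficient, the polynomial part forces $Q_0$'s contribution to vanish and the rational parts force the degree-$n$ leading coefficients $(q_{1,n},\dots,q_{m_n,n})$ of the $Q_\ell$ to lie in $\syz(u_1,\dots,u_{m_n})$ with the stated degree bounds — i.e.\ in $V_n$. Peeling off leading coefficients degree by degree, the space of $(Q_0,Q_1,\dots,Q_m)$ mapping to $Q=0$ has dimension at most $\sum_{n=0}^{r-\deg_\partial(L)}\dim_C V_n=\sum_n c_n\le (r-\deg_\partial(L)+1)\max_n c_n$; here I'd invoke the lemma just proved (stabilization of $c_n$) to pass from $\sum_n c_n$ to the cleaner $(r-\deg_\partial(L)+1)\max_n c_n$ bound already recorded in the text. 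Since the solution space has dimension strictly larger than this, some solution gives $Q\neq0$, hence $QL\neq0$, with $\deg_\partial(QL)\le r$ by construction and $\deg_x(QL)\le d$, $\deg_y(QL)\le h$ by the constraints we imposed.

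**Main obstacle.** The delicate point — and the step I'd write most carefully — is the bound on the dimension of the ``$Q=0$'' kernel, because the leading-coefficient-peeling must correctly handle the interaction between the polynomial block $Q_0$ and the rational blocks $Q_\ell$ at each $\partial$-level, the shifts $\sigma^{n-\deg_\partial(P_\ell)}$ applied to $\lc_\partial(P_\ell)/p_\ell$ (justifying the use of $\sigma$ in the definition of $u_\ell$), and the fact that only those $\ell$ with $\deg_\partial(P_\ell)\le n$ are active at level $n$ (the index $m_n$). Getting the degree bookkeeping for the number of equations exactly right — in particular the appearance of $\tilde\eta_x,\tilde\eta_y$ as maxima, which is needed so that the count is valid both when $d-\deg_x(L)+1$ dominates $\eta_x$ and when it does not — is the other place where care is required, but it is routine once the single-operator computation from the start of Section~\ref{SEC:desing} is in hand.
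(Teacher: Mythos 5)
Your overall approach — the ansatz $Q = Q_0 + \sum_{\ell=1}^m Q_\ell\frac1{p_\ell}P_\ell$, the degree bounds $\deg_x Q_\ell < \lambda_{x,\ell}$, $\deg_y Q_\ell < \lambda_{y,\ell}$, $\deg_x Q_0 < \tilde\eta_x$, $\deg_y Q_0 < \tilde\eta_y$, $\deg_\partial Q_\ell \leq r - \deg_\partial(L_\ell)$, the counting of unknowns and constraints, and the need to rule out tuples with $Q=0$ via the syzygy spaces $V_n$ — all matches the paper's proof.

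However, your final step does not close. You correctly bound the dimension of the ``$Q=0$'' kernel by $\sum_{n=0}^{r-\deg_\partial(L)} c_n$ (via a filtration by leading $\partial$-degree), and you then \emph{weaken} this to $(r-\deg_\partial(L)+1)\max_n c_n$. But the surplus of unknowns over constraints that the theorem's hypothesis guarantees is only $>\max_n c_n$, not $>\sum_n c_n$ and certainly not $>(r-\deg_\partial(L)+1)\max_n c_n$. So the sentence ``Since the solution space has dimension strictly larger than this'' is unsupported: the inequality $\sum_n c_n \le (r-\deg_\partial(L)+1)\max_n c_n$ goes in the wrong direction for the conclusion you want, and nothing in your argument shows that the surplus exceeds the kernel dimension. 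To make the argument close as you have set it up, you would need the surplus to dominate $\sum_n c_n$, not $\max_n c_n$. For comparison, the paper's proof asserts — without presenting the intermediate filtration bound — that a surplus of $\max_n c_n$ suffices, with the justification ``for each $n$ there can be at most $c_n$ many linearly independent solutions for which a cancellation happens.'' Your more explicit calculation actually exposes a tension with that claim rather than substantiating it: the phrase ``for each $n$'' naturally sums over $n$ when you add up the kernel dimensions, which is exactly what your $\sum_n c_n$ bound does. So either you are missing a genuinely subtler argument that lets one get away with $\max_n c_n$, or the hypothesis would need to be strengthened; in any case, the step as you have written it is a gap.
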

\begin{proof}
  For $r,d,h\in\set N$ as in the theorem, consider an ansatz
  \[
    Q:=Q_0 + Q_1\frac1{p_1}P_1 + \cdots + Q_m\frac1{p_m}P_m
  \]
  for a left-multiplier $Q$ for~$L$.
  Note that for every choice $Q_0,\dots,Q_m\in C[x,y][\partial]$ we have $QL\in C[x,y][\partial]$.
  Setting $p_0=1$ and $P_0=1$, we can write $Q=\sum_{\ell=0}^m Q_\ell\frac1{p_\ell}P_\ell$.
  For $\ell\geq1$, we choose the degree bounds $\deg_x(Q_\ell)<\deg_x(p_\ell) - \deg_x \lc_\partial(P_\ell)=\lambda_{x,\ell}$
  and $\deg_y(Q_\ell)<\deg_y(p_\ell) - \deg_y \lc_\partial(P_\ell)=\lambda_{y,\ell}$,
  and for $\ell=0$, we choose the degree bounds $\deg_x(Q_0)<\tilde\eta_x$ and $\deg_y(Q_0)<\tilde\eta_y$.
  We also impose the bounds $\deg_\partial(Q_\ell)\leq r - \deg_\partial(P_\ell) - \deg_\partial(L)$ for all~$\ell$.
  We then have $\deg_\partial(QL)\leq r$ as well as $\deg_x(Q)<\tilde\eta_x$ and $\deg_y(Q)<\tilde\eta_y$.
  In order to ensure the desired shape of~$QL$, we equate undesired coefficients of $Q$ to zero.
  Because $\deg_x(QL)=\deg_x(Q)+\deg_x(L)$ and $\deg_y(QL)=\deg_y(Q)+\deg_y(L)$, we want $Q$
  to be such that $\deg_x(Q)\leq d-\deg_x(L)$ and $\deg_y(Q)\leq h-\deg_y(L)$.
  Undesired coefficients are those which violate these degree conditions. 
  Taking into account that the coefficients of $Q$ are rational functions whose common denominator
  divides the polynomial $\prod_{\ell=1}^m\prod_{i=0}^{\deg_\partial Q_\ell}\sigma^i(p_\ell)$, which has
  $x$-degree at most $r\mu_x - \xi_x$ and
  $y$-degree at most $r\mu_y - \xi_y$, we get
  \begin{alignat*}1
    &\underbrace{(r - \deg_\partial(L) + 1)}_{\vbox{\footnotesize\hbox{number of $\partial$-mono-\vphantom{$y$}}\hbox{mials in $Q$}}}
    \Bigl(\underbrace{(\tilde\eta_x+r\mu_x-\xi_x)(\tilde\eta_y+r\mu_y-\xi_y)}_{\vbox{\footnotesize\hbox{number of $x$-$y$-monomials}\hbox{in the numerator of $Q$}}}\\
    &\hphantom{\times\bigl(}{}-\underbrace{(d - \deg_x(L) + 1+r\mu_x-\xi_x)(h - \deg_y(L) + 1+r\mu_y-\xi_y)}_{\text{number of $x$-$y$-monomials
        that are allowed to stay}}\Bigr)
  \end{alignat*}
  equations. The number of unknowns in the ansatz is
  \begin{alignat*}1
    &\sum_{\ell=0}^m (\deg_\partial(Q_\ell)+1)(\deg_x(Q_\ell)+1)(\deg_y(Q_\ell)+1)\\
    &=(r{-}\deg_\partial(L){+}1)\tilde\eta_x\tilde\eta_y + \sum_{\ell=1}^m(r{-}\deg_\partial(P_\ell){-}\deg_\partial(L){+}1)\lambda_{x,\ell}\lambda_{y,\ell}\\
    &=(r{-}\deg_\partial(L){+}1)\Bigl(\tilde\eta_x\tilde\eta_y+\sum_{\ell=1}^m \lambda_{x,\ell}\lambda_{y,\ell}\Bigr)
      -\sum_{\ell=1}^m \deg_\partial(P_\ell)\lambda_{x,\ell}\lambda_{y,\ell}.
  \end{alignat*}
  If the number of variables exceeds the number of equations, the linear system will have nonzero solutions.
  However, a nonzero solution $(Q_0,\dots,Q_m)$ need not translate into a nonzero multiplier
  $Q=\sum_{\ell=0}^m Q_\ell\frac1{p_\ell}P_\ell$.
  There is a danger of cancellations. By the restrictions on the degrees, no cancellation can happen between
  $Q_0$ and $\sum_{\ell=1}^m Q_\ell\frac1{p_\ell}P_\ell$, because $Q_0$ has only polynomial coefficients while
  $\sum_{\ell=1}^m Q_\ell\frac1{p_\ell}P_\ell$ is either zero or the leading coefficient is a proper rational
  function. It remains to avoid that $\sum_{\ell=1}^m Q_\ell\frac1{p_\ell}P_\ell$ is identically zero.
  If it is identically zero, there must in particular be a cancellation among the leading terms.
  Let $n$ be maximal such that at least one of $[\partial^n] Q_\ell\frac1{p_\ell}P_\ell$ is nonzero,
  where we write $[\partial^i]X$ for the coefficient of $\partial^i$ in $X\in C(x,y)[\partial]$.
  Then
  \[
    ([\partial^{n-\deg_\partial(P_1)}]Q_1, \dots, [\partial^{n-\deg_\partial(P_m)}]Q_m)
  \]
  belongs to the vector space~$V_n$ of Def.~\ref{def:syz} (coordinates $\ell$ with $n<\deg_\partial(P_\ell)$
  are meant to be omitted).
  For each $n$ there can be at most $c_n$ many linearly independent solutions for which a cancellation happens.
  Therefore, if the number of variables exceeds the number of equations by more than $\max_{n=0}^{r - \deg_\partial(L)} c_n$,
  there must be at least one solution that does not completely cancel.
  Since this is the case if $r,d,h$ are chosen as specified in the theorem, we are done. 
\end{proof}

\begin{example}
  Let $L\in C[x,y][S_x]$ be the minimal telescoper of the hypergeometric term $H=k\Gamma(x+k+y^2)/\Gamma(x-k+y)$
  already considered in Example~\ref{EX:nonrat}.
  It has order~2, degree~5, and height~9, and we have
  \[
   \lc_\partial(L) = (2 x+y^2+y)(x^2+x y^2+x y+y^3-1).
  \]
  There is an $L_1$ of $\con\<L>$ of order~3, degree~8, height~8 and with
  \[
    \lc_\partial(L_1) = 6 x^2+6 x y^2+6 x y+6 x+y^4+4 y^3+4 y^2+3 y.
  \]
  Applying Thm.~\ref{thm:contraction} to $L$ and $L_1$ gives the following height predictions
  for elements of $\con\<L>$ of various orders $r$ and degrees~$d$ (in comparison to the actual
  smallest heights for elements of the respective shape):
  \[
\scriptscriptstyle
  \begin{array}{@{}c|c@{\kern5pt}c@{\kern5pt}c@{\kern5pt}c@{\kern5pt}c@{\kern5pt}c@{\kern5pt}c@{\kern5pt}c@{\kern5pt}c@{\kern5pt}c@{\kern5pt}c@{\kern5pt}c@{}}
    \llap{$d$ }10&\cdot|\cdot & \cdot|\cdot & \cdot|9 & 8|7 & 8|6 & 8|5 & 8|5 & 8|5 \\
    9&\cdot|\cdot & \cdot|\cdot & \cdot|9 & 8|7 & 8|6 & 8|5 & 8|5 & 8|5 \\
    8&\cdot|\cdot & \cdot|\cdot & \cdot|9 & 8|7 & 8|6 & 8|5 & 8|5 & 8|5 \\
    7&\cdot|\cdot & \cdot|\cdot & \cdot|9 & 10|7 & \x{12}|6 & 13|6 & 15|5 & 17|5 \\
    6&\cdot|\cdot & \cdot|\cdot & \cdot|9 & 13|7 & \x{18}|6 & 23|6 & 28|5 & 33|5 \\
    5&\cdot|\cdot & \cdot|\cdot & \cdot|9 & 23|7 & \x{38}|7 & 53|6 & 68|5 & 83|5 \\
    4&\cdot|\cdot & \cdot|\cdot & \cdot|\cdot & \cdot|7 & \cdot|7 & \cdot|6 & \cdot|6 & \cdot|5 \\
    3&\cdot|\cdot & \cdot|\cdot & \cdot|\cdot & \cdot|10 & \cdot|8 & \cdot|6 & \cdot|6 & \cdot|6 \\
    2&\cdot|\cdot & \cdot|\cdot & \cdot|\cdot & \cdot|\cdot & \cdot|\cdot & \cdot|8 & \cdot|8 & \cdot|8 \\
    1&\cdot|\cdot & \cdot|\cdot & \cdot|\cdot & \cdot|\cdot & \cdot|\cdot & \cdot|\cdot & \cdot|\cdot & \cdot|\cdot \\
    0&\cdot|\cdot & \cdot|\cdot & \cdot|\cdot & \cdot|\cdot & \cdot|\cdot & \cdot|\cdot & \cdot|\cdot & \cdot|\cdot \\\hline
    & 0 & 1 & 2 & 3 & 4 & 5 & 6 & 7\rlap{ $r$}
  \end{array}
  \]
  We see that Thm.~\ref{thm:contraction} overshoots but rightly indicates that order and degree
  can be traded against height, for example for $(r,d)=(3,8)$, the predicted height is~$8$, which
  is less than the predicted height for $(r,d)=(3,7)$. 
  Also, similar to Thm.~\ref{THM:hyperodh}, for fixed $d$ and increasing $r$ the minimal predicted heights
  are not weakly decreasing.
  Like in Thm.~\ref{THM:hyperodh}, we blame the quadratic term in $r$ appearing in the inequality for this
  behaviour.
  Thm.~\ref{thm:contraction} leads to better estimates than Thm.~\ref{THM:hyperodh} because knowing $L$ and $L_1$
  in advance is a much stronger assumption than knowing the hypergeometric term~$H$.
\end{example}

Thm.~\ref{thm:contraction} appears to be a natural generalization of the order-degree curve from~\cite{CJKS2013}.
While this order-degree curve turns out to make quite accurate degree predictions, Thm.~\ref{thm:contraction}
is not that sharp. As mentioned in the beginning, an order-degree curve for an operator~$L$
emerges if there are $L_1,P,p$ with $pL_1=PL$ and $\deg_x(p)>\deg_x(\lc_\partial(P))$. In the univariate
case, we can arrange this situation whenever $\lc_\partial(L)$ has a removable factor. It can
in fact always be arranged that $\lc_\partial(P)=1$. This is no longer true for operators in
$C[x,y][\partial]$, because $C[x,y]$ is not a Euclidean domain. As a consequence, it can happen
(and seems to be common) that a factor of $\lc_\partial(L)$ can only be removed at the cost of
introducing another factor into the leading coefficient.

Another interesting difference between Thm.~\ref{thm:contraction} and Thm.~9 of~\cite{CJKS2013} is
that $\deg_x(P)$ cancels out in the derivation of the order-degree curve (cf. the summary at the
beginning of this section), but it no longer cancels in the more general setting of Thm.~\ref{thm:contraction}.
This difference is also responsible for the disturbing quadratic dependence on~$r$, which has no
counterpart in the univariate case. 

\section{Conclusion}

We have shown several situations that order and degree can be traded against height.
The effect was observed experimentally with actual operators and is supported by the theoretical
bounds we have given.
We believe that our results will be useful for deriving refined complexity analyses for
operations involving operators that take heights into account.

Our results are limited to the case of operators in $C[x,y][\partial]$ with $y$-degree
as height, and it would be interesting to have analogous results for operators in $\set Z[x][\partial]$
with integer bitlength as height. We conclude the paper with an example indicating that
trading effects can also be expected in this setting.

\begin{example}
  For $a_n=\sum_{k=0}^n(\binom{n}{2k}+\binom{2n}{k}^2)$ let $a(x)=\sum_{n=0}^\infty a_nx^n$.
  Using LLL, we searched for differential operators of various orders~$r$ and degrees~$d$ that
  annihilate the series $a(x)$ and involve short integers. The results are shown in the table
  below; a number $h$ in cell $(r,d)$ of the table indicates that we found an operator of
  size $(r,d)$ which only involves integers with at most $h$ decimal digits. 
  
  Note that since multiplying an operator from the left by a power of $x$ increases the
  degree of the operator without changing any of its coefficients, the minimal heights
  for a fixed $r$ must be weakly decreasing while moving along the positive direction of $d$-axis. The height~9 we observe
  for order~12 and degree~14 contradicts this expectation. We believe that this is due
  to the fact that LLL is not guaranteed to find the shortest vector of a given lattice.
  For a fixed $d$ and increasing~$r$, it is conceivable that the minimal height increases.
  The increase we observe at order~14 and degree~7 however also seems to be an artifact
  of the LLL-computation, as the next few heights of degree~7 are again~9.
  Same remarks apply to some other cells, e.g., $(12,9)$, $(14,15)$, etc.
  \[\scriptscriptstyle
     \begin{array}{@{}c@{\kern3pt}|@{\kern3pt}c@{\kern3pt}c@{\kern3pt}c@{\kern3pt}c@{\kern3pt}c@{\kern3pt}c@{\kern3pt}c@{\kern3pt}c@{\kern3pt}c@{\kern3pt}c@{\kern3pt}c@{\kern3pt}c@{\kern3pt}c@{\kern3pt}c@{\kern3pt}c@{\kern3pt}c@{\kern3pt}c@{}}
    \llap{$d$ }20 & \cdot & 10 & 9 & 8 & 8 & 8 & 8 & 8 & 9 & 9 & 9 & 10 & 11 & 11 & 11 & 11 & 12 \\
    19 & \cdot & 10 & 9 & 8 & 8 & 8 & 8 & 8 & 9 & 9 & 9 & 10 & 10 & 11 & 11 & 11 & 12 \\
    18 & \cdot & 10 & 9 & 8 & 8 & 8 & 8 & 9 & 9 & 9 & 9 & 10 & 10 & 11 & 11 & 11 & 11 \\
    17 & \cdot & 10 & 9 & 8 & 8 & 8 & 8 & 8 & 9 & 9 & 9 & 9 & 10 & 10 & 10 & 11 & 11 \\
    16 & \cdot & 10 & 9 & 8 & 8 & 8 & 8 & 8 & 8 & 9 & 9 & 10 & 9 & 10 & 10 & 10 & 11 \\
    15 & \cdot & 10 & 9 & 8 & 8 & 8 & 8 & 8 & 9 & 8 & 9 & 9 & 10 & 10 & 10 & 10 & 11 \\
    14 & \cdot & 10 & 9 & 8 & 8 & 8 & 8 & 8 & 9 & 8 & 8 & 9 & 8 & 10 & 10 & 10 & 11 \\
    13 & \cdot & 10 & 9 & 8 & 8 & 8 & 8 & 8 & 8 & 8 & 8 & 8 & 8 & 8 & 10 & 10 & 10 \\
    12 & \cdot & 10 & 10 & 8 & 8 & 8 & 8 & 8 & 8 & 8 & 8 & 8 & 8 & 8 & 8 & 10 & 8 \\
    11 & \cdot & \cdot & 11 & 9 & 8 & 8 & 8 & 8 & 8 & 8 & 8 & 8 & 8 & 8 & 8 & 8 & 8 \\
    10 & \cdot & \cdot & 12 & 9 & 8 & 8 & 8 & 8 & 8 & 8 & 8 & 8 & 8 & 8 & 8 & 8 & 8 \\
    9 & \cdot & \cdot & 14 & 10 & 9 & 8 & 8 & 8 & 9 & 8 & 8 & 8 & 8 & 8 & 8 & 8 & 8 \\
    8 & \cdot & \cdot & 27 & 11 & 10 & 9 & 9 & 9 & 9 & 9 & 9 & 9 & 9 & 9 & 9 & 9 & 9 \\
    7 & \cdot & \cdot & \cdot & 17 & 12 & 10 & 10 & 9 & 9 & 9 & 10 & 9 & 9 & 9 & 9 & 9 & 9 \\
    6 & \cdot & \cdot & \cdot & \cdot & 27 & 15 & 12 & 11 & 11 & 11 & 11 & 10 & 11 & 11 & 10 & 11 & 10 \\
    5 & \cdot & \cdot & \cdot & \cdot & \cdot & \cdot & \cdot & \cdot & 11 & 11 & 11 & 11 & 11 & 11 & 11 & 11 & 11 \\
    4 & \cdot & \cdot & \cdot & \cdot & \cdot & \cdot & \cdot & \cdot & \cdot & \cdot & \cdot & \cdot & \cdot & \cdot & \cdot & \cdot & \cdot \\ \hline
      & 4 & 5 & 6 & 7 & 8 & 9 & 10 & 11 & 12 & 13 & 14 & 15 & 16 & 17 & 18 & 19 & 20\rlap{ $r$} \\
     \end{array}
  \]
\end{example}

\noindent\textbf{Acknowledgement.}
We thank the referees for their careful reading and their valuable suggestions. 

\bibliographystyle{plain}
\bibliography{scrfs}

\end{document}